\title{Optimal Ridge Detection using Coverage Risk}
\author{
Yen-Chi Chen\\
Department of Statistics\\
Carnegie Mellon University\\
\texttt{yenchic@andrew.cmu.edu} \\
\And
Christopher R. Genovese\\
Department of Statistics\\
Carnegie Mellon University\\
\texttt{genovese@stat.cmu.edu}\\
\And
Shirley Ho\\
Department of Physics\\
Carnegie Mellon University\\
\texttt{shirleyh@andrew.cmu.edu}\\
\And
Larry Wasserman\\
Department of Statistics\\
Carnegie Mellon University\\
\texttt{larry@stat.cmu.edu}\\
}
\newcommand\cL{\mathcal{L}}
\begin{document}
\maketitle{}

\begin{abstract}
We introduce the concept of coverage risk
as an error measure for density ridge estimation.
The coverage risk
generalizes the mean integrated square error
to set estimation.
We propose two risk estimators for the coverage risk
and we show that we can select tuning parameters
by minimizing the estimated risk.
We study the rate of convergence for coverage risk
and prove consistency of the risk estimators.
We apply our method to three simulated datasets
and to cosmology data.
In all the examples, the proposed method successfully
recover the underlying density structure.
\end{abstract}

\newtheorem{thm}{Theorem}
\newtheorem{lem}[thm]{Lemma}
\newtheorem{cor}[thm]{Corollary}
\newtheorem{proposition}[thm]{Proposition}
\newenvironment{definition}[1][Definition]{\begin{trivlist}
\item[\hskip \labelsep {\bfseries #1}]}{\end{trivlist}}
\let\hat\widehat
\let\tilde\widetilde

\setlength{\parindent}{0cm}
\setlength{\parskip}{\baselineskip}

\newcommand\R{\mathbb{R}}
\newcommand\E{\mathbb{E}}
\newcommand\K{\mathbb{K}}
\newcommand\mathand{\ {\rm and}\ }
\newcommand\norm[1]{\|#1\|}
\newcommand\dest{{\sf dest}}
\newcommand\MISE{{\sf MISE}}
\newcommand\mode{{\sf Mode}}
\newcommand\ridge{{\sf Ridge}}
\newcommand\Coverage{{\sf Coverage}}
\newcommand\Haus{{\sf Haus}}
\newcommand\Loss{{\sf Loss}}
\newcommand\Risk{{\sf Risk}}
\newcommand\length{{\sf length}}
\newcommand\reach{{\sf reach}}
\newcommand\argmin{{\sf argmin}}
\newcommand\Var{{\sf Var}}
\newcommand\Cov{{\sf Cov}}
\newcommand\Tr{{\sf Tr}}

\newcommand\cI{{\cal I}}

\catcode`@=11
\newskip\beforeproofvskip
\newskip\afterproofvskip
\beforeproofvskip=\medskipamount
\afterproofvskip=\medskipamount
\def\proofsquare{\square}
\def\prooftag{Proof}
\def\proofskip{\enspace}

\def\proof{\@ifnextchar[{\@@proof}{\@proof}}  
\def\@startproof{\par\vskip\beforeproofvskip\leavevmode}
\def\@proof{\@startproof{\scshape\prooftag.}\proofskip}
\def\@@proof[#1]{\@startproof {\scshape\prooftag #1.}\proofskip}
\def\endproof{\hskip 1em $\proofsquare$\par\vskip\afterproofvskip}
\catcode`@=12

\section{Introduction}

Density ridges \citep{Eberly1996,Ozertem2011,Genovese2012a,chen2014asymptotic}
are one-dimensional curve like structures that characterize high density
regions.
Density ridges have been applied to 
computer vision \citep{bas2012local},
remote sensing \citep{miao2014method},
biomedical imaging \citep{bas2011automated},
and cosmology \citep{Chen2014GMRE,chen2015cosmic}.
Figure \ref{fig::cosmicweb} provides an example for applying density ridges
to learn the structure of our Universe.

To detect the density ridges from data,
\citep{Ozertem2011} proposed the
`Subspace Constrained Mean Shift (SCMS)' algorithm.
SCMS is a modification of usual mean shift algorithm 
\citep{fukunaga1975estimation,cheng1995mean}
to adapt to the local geometry. 
Unlike mean shift that pushes 
every mesh point to a local mode, SCMS moves the meshes 
along a projected gradient until arriving at nearby ridges.
Essentially, the SCMS algorithm detects the ridges of
the kernel density estimator (KDE).
Therefore,
the SCMS algorithm requires a pre-selected parameter $h$,
which acts as the role of smoothing bandwidth in the kernel density
estimator.

Despite the wide application of the SCMS algorithm,
the choice of $h$ remains an unsolved problem.
Similar to the density estimation problem,
a poor choice of $h$ results in 
over-smoothing or under-smoothing
for the density ridges.
See the second row of Figure \ref{fig::cosmicweb}.

In this paper, we introduce the concept of coverage risk
which is a generalization of the mean integrated expected error
from function estimation. 
We then show that one can consistently estimate the coverage risk
by using data splitting or the smoothed bootstrap.
This leads us to a data-driven selection rule for 
choosing the parameter $h$ for the SCMS algorithm.
We apply the proposed method to 
several famous datasets including the spiral dataset,
the three spirals dataset, and the NIPS dataset.
In all simulations, our selection rule allows
the SCMS algorithm to detect the underlying structure
of the data.

\subsection{Density Ridges}

Density ridges are defined as follows.
Assume $X_1,\cdots,X_n$ are independently and identically distributed from
a smooth probability density function $p$ with compact support $\K$.
The density ridges \citep{Eberly1996, genovese2014nonparametric,
chen2014asymptotic} are defined as 
$$
R=\{x\in\K: V(x)V(x)^T \nabla p(x) =0, \lambda_2(x)<0\},
$$
where $V(x) = [v_2(x),\cdots v_d(x)]$ with $v_j(x)$ being the eigenvector associated with
the ordered eigenvalue $\lambda_j(x)$ ($\lambda_1(x)\geq \cdots \geq \lambda_d(x)$) 
for Hessian matrix $H(x) = \nabla \nabla p(x)$.
That is, $R$ is the collection of points whose \emph{projected gradient} $V(x)V(x)^T \nabla p(x)=0$.
It can be shown that (under appropriate conditions),
$R$ is a collection of $1$-dimensional
smooth curves ($1$-dimensional manifolds) in $\R^d$.

The SCMS algorithm is a plug-in estimate for $R$ by using
$$
\hat{R}_n=\left\{x\in\K: \hat{V}_n(x)\hat{V}_n(x)^T \nabla \hat{p}_n(x) =0, \hat{\lambda}_2(x)<0\right\},
$$
where $\hat{p}_n(x) = \frac{1}{nh^d} \sum_{i=1}^n K\left(\frac{x-X_i}{h}\right)$
is the KDE and $\hat{V}_n$ and $\hat{\lambda}_2$ are 
the associated quantities defined by $\hat{p}_n$.
Hence, one can clearly see that the parameter $h$ in the SCMS algorithm 
plays the same role of smoothing bandwidth for the KDE.

\begin{figure}
\center
\includegraphics[width=5.4in]{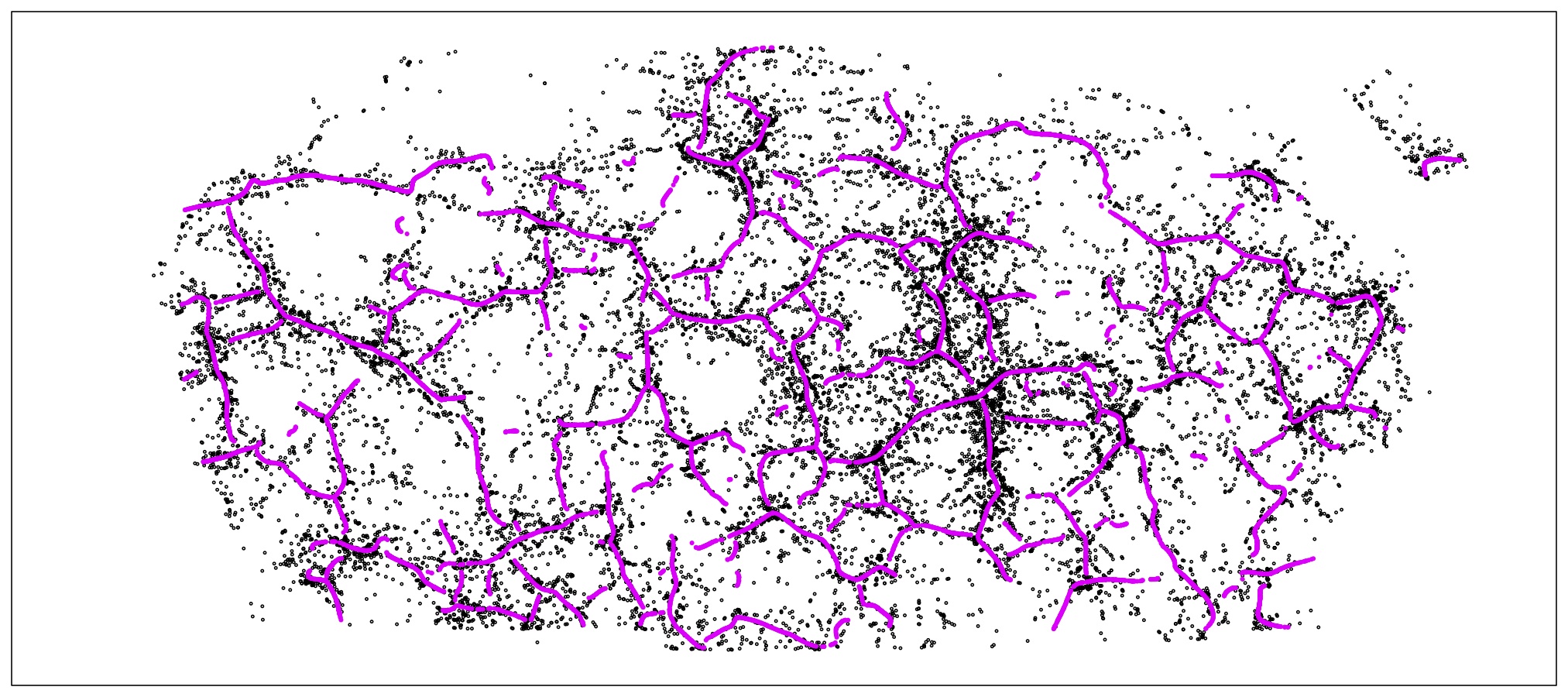}
\includegraphics[width=1.8in]{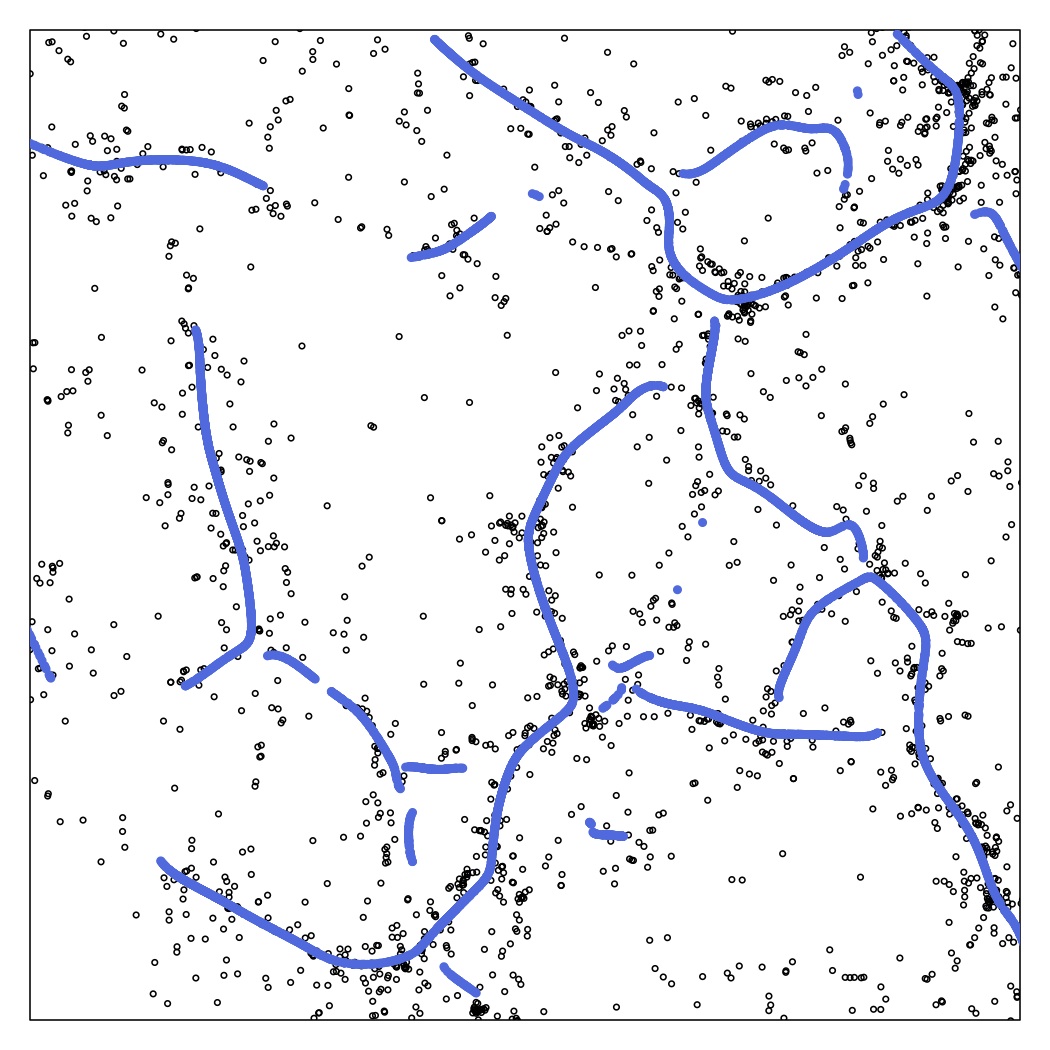}
\includegraphics[width=1.8in]{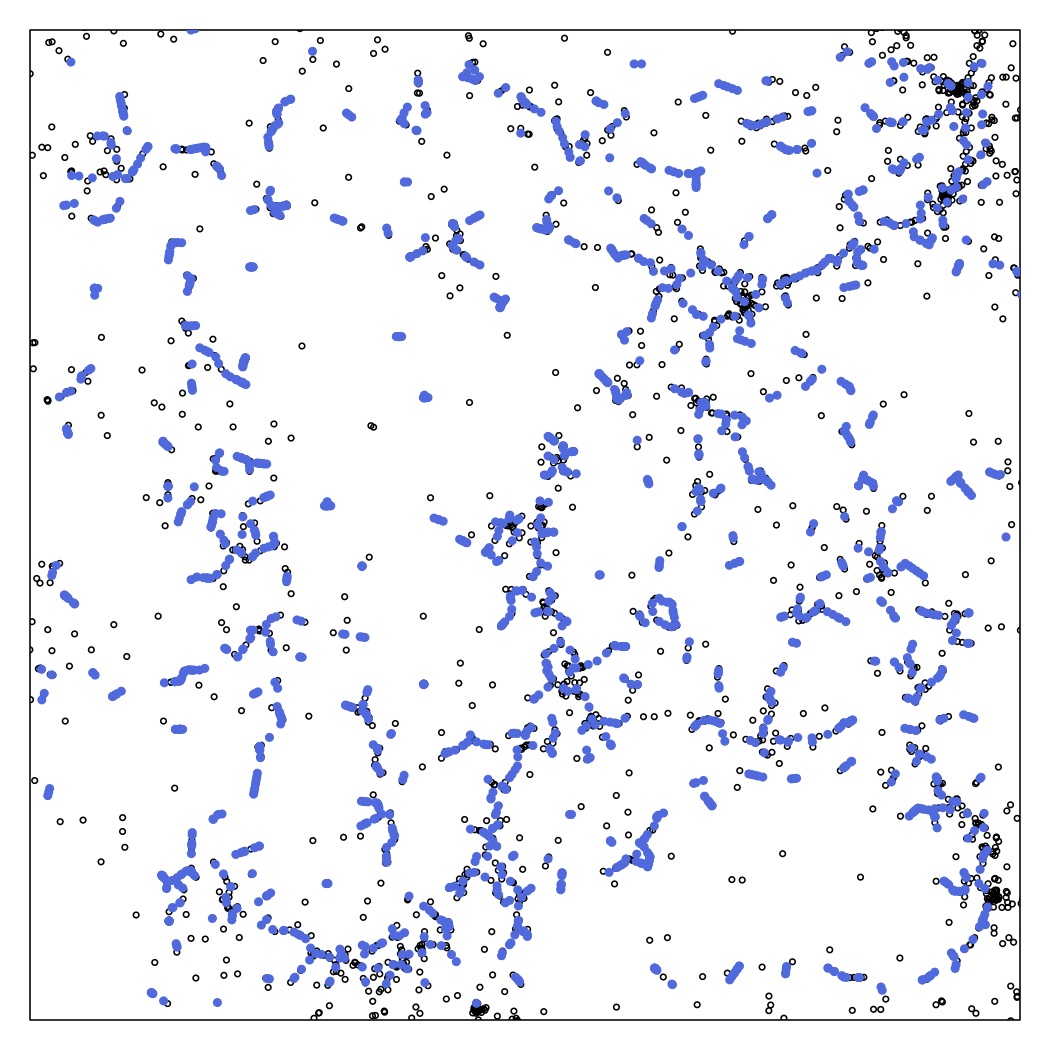}
\includegraphics[width=1.8in]{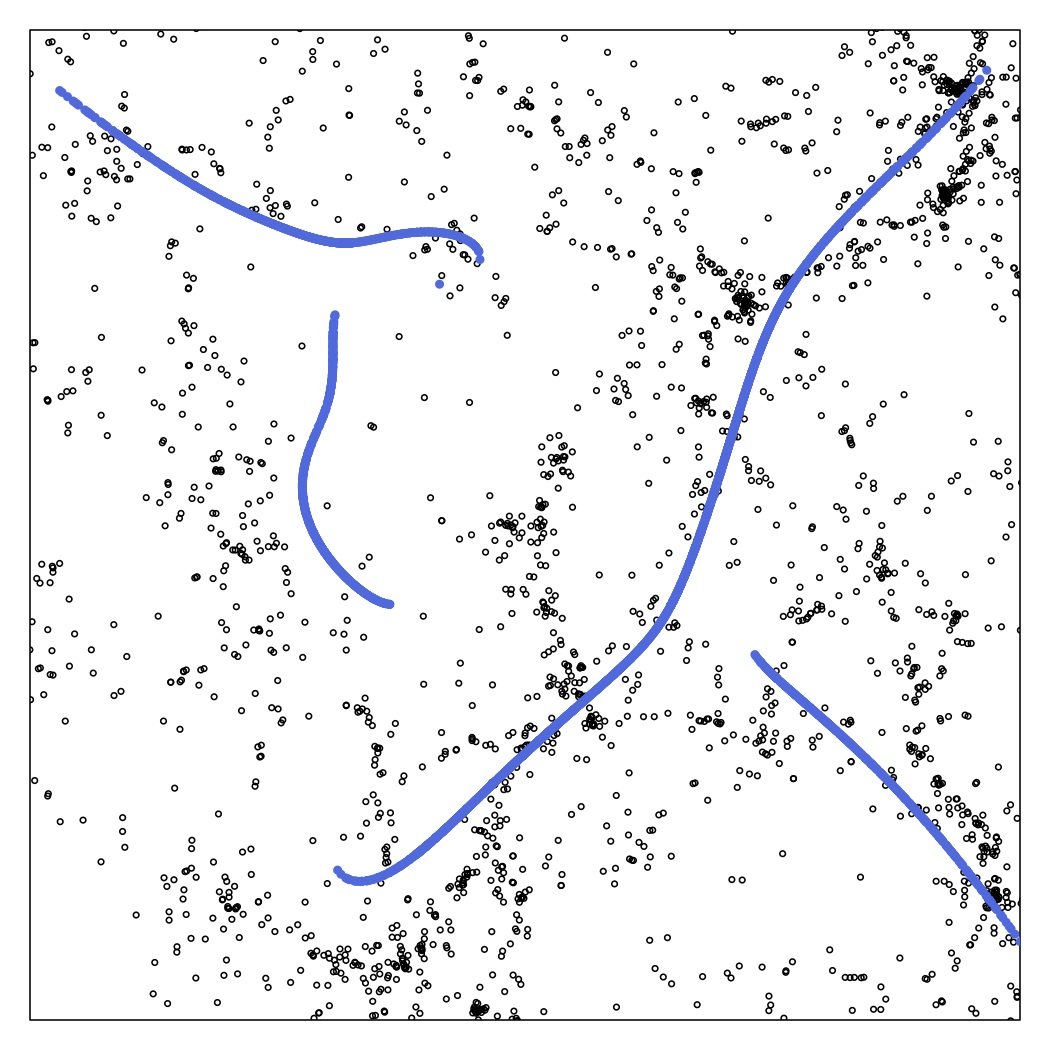}
\caption{The cosmic web.
This is a slice of the observed Universe
from the Sloan Digital Sky Survey.
We apply the density ridge method to detect filaments
\citep{chen2015cosmic}.
The top row is one example for the detected filaments.
The bottom row shows the effect of smoothing.
Bottom-Left: optimal smoothing.
Bottom-Middle: under-smoothing.
Bottom-Right: over-smoothing.
Under optimal smoothing, we detect an intricate filament network.
If we under-smooth or over-smooth the dataset,
we cannot find the structure.
}
\label{fig::cosmicweb}
\end{figure}

\section{Coverage Risk}

Before we introduce the coverage risk, we first define some geometric concepts.
Let $\mu_\ell$ be the $\ell$-dimensional Hausdorff measure \citep{evans1991measure}. 
Namely,
$\mu_1(A)$ is the length of set $A$ 
and $\mu_2(A)$ is the area of $A$.
Let $d(x, A)$ be the projection distance from point $x$ to a set $A$.
We define $U_R$ and $U_{\hat{R}_n}$ as random variables uniformly distributed over 
the true density ridges $R$
and the ridge estimator $\hat{R}_n$ respectively.
Assuming $R$ and $\hat{R}_n$ are given, we define the following two 
random variables
\begin{equation}
W_n = d(U_R, \hat{R}_n), \quad \tilde{W}_n = d(U_{\hat{R}_n}, R).
\end{equation}
Note that $U_R,U_{\hat{R}_n}$ are random variables while $R, \hat{R}_n$
are sets.
$W_n$ is the distance from a randomly selected point on $R$
to the estimator $\hat{R}_n$ and $\tilde{W}_n$
is the distance from a random point on $\hat{R}_n$ to $R$.

Let $\Haus(A,B)=\inf\{r: A\subset B\oplus r, B\subset A\oplus r\}$ be the Hausdorff distance
between $A$ and $B$ where $A\oplus r =\{x: d(x, A)\leq r\}$.
The following lemma gives some useful properties about $W_n$ and $\tilde{W}_n$.
\begin{lem}
Both random variables $W_n$ and $\tilde{W}_n$ are bounded by $\Haus(\hat{M}_n, M)$.
Namely,
\begin{equation}
0\leq W_n\leq \Haus(\hat{R}_n, R), \quad 0\leq \tilde{W}_n\leq \Haus(\hat{R}_n, R).
\end{equation}
The cumulative distribution function (CDF) for $W_n$ and $\tilde{W}_n$ are
\begin{equation}
\begin{aligned}
\mathbb{P}(W_n\leq r|\hat{R}_n) = \frac{\mu_{1}\left(R\cap (\hat{R}_n\oplus r)\right)}{\mu_{1}\left(R\right)},\quad
\mathbb{P}(\tilde{W}_n\leq r|\hat{R}_n) = 
\frac{\mu_{1}\left(\hat{R}_n\cap (R\oplus r)\right)}{\mu_{1}\left(\hat{R}_n\right)}.
\end{aligned}
\end{equation}
Thus, $\mathbb{P}(W_n\leq r|\hat{R}_n)$ is the ratio of $R$ being covered by
padding the regions around $\hat{R}_n$ at distance $r$.
\label{lem::prop}
\end{lem}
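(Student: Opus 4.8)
The plan is to dispatch the three assertions in order, each by unwinding the definitions of $U_R$, $U_{\hat R_n}$, the dilation $A\oplus r=\{x:d(x,A)\le r\}$, and the Hausdorff distance; no substantive analysis is needed.

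\emph{Boundedness.} I would argue directly from the definition $\Haus(\hat R_n,R)=\inf\{r: \hat R_n\subset R\oplus r,\ R\subset \hat R_n\oplus r\}$. Since $U_R$ takes values in $R$ almost surely and $R\subset \hat R_n\oplus \Haus(\hat R_n,R)$, every realization of $U_R$ lies within distance $\Haus(\hat R_n,R)$ of $\hat R_n$; hence $W_n=d(U_R,\hat R_n)\le \Haus(\hat R_n,R)$, and $W_n\ge 0$ because $d(\cdot,\cdot)$ is a distance. The bound $0\le\tilde W_n\le\Haus(\hat R_n,R)$ is obtained by the same argument with the roles of $R$ and $\hat R_n$ interchanged, using $\hat R_n\subset R\oplus\Haus(\hat R_n,R)$. (Note the statement's $\Haus(\hat M_n,M)$ should read $\Haus(\hat R_n,R)$, as in the displayed inequality.)

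\emph{The CDF formulas.} The key observation is that, conditional on $\hat R_n$, the event $\{W_n\le r\}$ is exactly $\{U_R\in \hat R_n\oplus r\}$, since $d(U_R,\hat R_n)\le r$ means precisely that $U_R\in \hat R_n\oplus r=\{x:d(x,\hat R_n)\le r\}$. Because $U_R$ is uniform on $R$ with respect to the one-dimensional Hausdorff measure $\mu_1$ — which is legitimate since, under the standing smoothness and compactness hypotheses on $p$, $R$ is a finite union of compact $1$-dimensional curves and hence $0<\mu_1(R)<\infty$ — we have $\mathbb{P}(U_R\in A\mid\hat R_n)=\mu_1(R\cap A)/\mu_1(R)$ for every Borel set $A$. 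Taking $A=\hat R_n\oplus r$, which is closed (the map $x\mapsto d(x,\hat R_n)$ is $1$-Lipschitz) and therefore Borel, yields $\mathbb{P}(W_n\le r\mid\hat R_n)=\mu_1\!\left(R\cap(\hat R_n\oplus r)\right)/\mu_1(R)$. The formula for $\tilde W_n$ follows from the same template: $\{\tilde W_n\le r\}=\{U_{\hat R_n}\in R\oplus r\}$, $U_{\hat R_n}$ is uniform on $\hat R_n$, and one applies the uniform-measure identity with $A=R\oplus r$. Finally, the interpretation statement is just a reading of the first formula: $\mu_1\!\left(R\cap(\hat R_n\oplus r)\right)/\mu_1(R)$ is by definition the fraction of the length of $R$ contained in the $r$-tube around $\hat R_n$.

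\emph{Main point of care.} There is no real obstacle here; the only things worth recording are (i) that the uniform laws on $R$ and on $\hat R_n$ are well defined, i.e.\ $0<\mu_1(R)<\infty$ (from the hypotheses on $p$) and $0<\mu_1(\hat R_n)<\infty$ — for the latter one should restrict to, or include in the regularity assumptions, the event on which $\hat R_n$ is a nonempty finite union of rectifiable curves — and (ii) the trivial measurability check that $\hat R_n\oplus r$ and $R\oplus r$ are Borel, which holds because they are closed. Everything else is the direct translation between $d(x,A)\le r$ and $x\in A\oplus r$.
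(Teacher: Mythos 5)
Your proof is correct and is precisely the definition-unwinding argument the paper has in mind — the authors omit the proof entirely, stating only that the lemma ``follows trivially by definition.'' Your added care about measurability, the finiteness and positivity of $\mu_1(R)$ and $\mu_1(\hat{R}_n)$, and the typo $\Haus(\hat{M}_n,M)$ versus $\Haus(\hat{R}_n,R)$ is all accurate and goes slightly beyond what the paper records.
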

This lemma follows trivially by definition so that we omit its proof.
Lemma \ref{lem::prop} links the random variables $W_n$ and $\tilde{W}_n$
to the Hausdorff distance and the coverage for $R$ and $\hat{R}_n$.
Thus, we call them \emph{coverage} random variables.
Now we define the $\cL_1$ and $\cL_2$ \emph{coverage risk} for estimating $R$ by $\hat{R}_n$ as
\begin{equation}
\Risk_{1,n}= \frac{\mathbb{E}(W_n+\tilde{W}_n)}{2},\quad
\Risk_{2,n} = \frac{\mathbb{E}(W^2_n+\tilde{W}^2_n)}{2}.
\label{eq::risk}
\end{equation}
That is, $\Risk_{1,n}$ (and $\Risk_{2,n}$) is the expected (square) projected distance 
between $R$ and $\hat{R}_n$.
Note that the expectation in \eqref{eq::risk} applies to both $\hat{R}_n$ and $U_R$.
One can view $\Risk_{2,n}$ as a generalized mean integrated square errors (MISE)
for sets.

A nice property of $\Risk_{1,n}$ and $\Risk_{2,n}$
is that they are not sensitive to outliers of $R$ in the sense that
a small perturbation of $R$ will not change the risk too much.
On the contrary, the Hausdorff distance is very sensitive to outliers.


\subsection{Selection for Tuning Parameters Based on Risk Minimization}	\label{sec::tuning}

In this section, we will show how to choose $h$ by minimizing 
an estimate
of the risk.

We propose two risk estimators.
The first estimator is based on the \emph{smoothed bootstrap} \citep{silverman1987bootstrap}.
We sample $X^*_1,\cdots X^*_n$ from the KDE $\hat{p}_n$
and recompute the estimator $\hat{R}^*_n$.
The we estimate the risk by
\begin{equation}
\hat{\Risk}_{1,n}= \frac{\mathbb{E}(W^*_n+\tilde{W}^*_n|X_1,\cdots,X_n)}{2},\quad
\hat{\Risk}_{2,n} = \frac{\mathbb{E}(W^{*2}_n+\tilde{W}^{*2}_n|X_1,\cdots,X_n)}{2},
\end{equation}
where $W^*_n= d(U_{\hat{R}_n}, \hat{R}^*_n)$ and $\tilde{W}^*_n= d(U_{\hat{R}^*_n}, \hat{R}_n)$.

The second approach is to use data splitting.
We randomly split the data into
$X^\dagger_{11},\cdots, X^\dagger_{1m}$ and $X^\dagger_{21},\cdots, X^\dagger_{2m}$,
assuming $n$ is even and $2m=n$.
We compute the estimated manifolds by using half of the data,
which we denote as $\hat{R}^\dagger_{1,n}$ and $\hat{R}^\dagger_{2,n}$.
Then we compute 
\begin{equation}
\hat{\Risk}^\dagger_{1,n}= \frac{\mathbb{E}(W^\dagger_{1,n}+W^\dagger_{2,n}|X_1,\cdots,X_n)}{2},\quad
\hat{\Risk}^\dagger_{2,n} = \frac{\mathbb{E}(W^{\dagger2}_{1,n}+W^{\dagger2}_{2,n}|X_1,\cdots,X_n)}{2},
\end{equation}
where $W^\dagger_{1,n} = d(U_{\hat{R}^\dagger_{1,n}}, \hat{R}^\dagger_{2,n})$
and $W^\dagger_{2,n} = d(U_{\hat{R}^\dagger_{2,n}}, \hat{R}^\dagger_{1,n})$.

Having estimated the risk, we
select $h$ by
\begin{equation}
h^* = \underset{{h\leq \bar{h}_n}}{\argmin}\,\,\, \hat{\Risk}^\dagger_{1,n},
\label{eq::h_select}
\end{equation}
where $\bar{h}_n$ is an upper bound by the normal reference rule \citep{Silverman1986}
(which is known to oversmooth, so that we only select $h$ below this rule).
Moreover, one can choose $h$ by minimizing $\cL_2$ risk as well.

In \cite{einbeck2011bandwidth}, they consider
selecting the smoothing bandwidth for local principal curves by self-coverage.
This criterion is a different from ours.
The self-coverage counts data points.
The self-coverage is a monotonic increasing
function and they propose to select
the bandwidth such that the derivative is highest.
Our coverage risk yields a simple trade-off curve
and one can easily pick the optimal bandwidth by minimizing the estimated risk.


\section{Manifold Comparison by Coverage}

The concepts of coverage in previous section can be generalized 
to comparing two manifolds.
Let $M_1$ and $M_2$ be an $\ell_1$-dimensional and an $\ell_2$-dimensional
manifolds ($\ell_1$ and $\ell_2$ are not necessarily the same). 
We define the coverage random variables
\begin{equation}
W_{12} = d(U_{M_1}, M_2),\quad W_{21} = d(U_{M_2}, M_1).
\end{equation}
Then by Lemma~\ref{lem::prop}, the CDF for $W_{12}$ and $W_{21}$
contains information about how $M_1$ and $M_2$
are different from each other:
\begin{equation}
\begin{aligned}
\mathbb{P}(W_{12}\leq r) = \frac{\mu_{\ell_1}\left(M_1\cap (M_2\oplus r)\right)}{\mu_{\ell_2}\left(M_1\right)},\quad
\mathbb{P}(W_{21}\leq r) = \frac{\mu_{\ell_2}\left(M_2\cap (M_1\oplus r)\right)}{\mu_{r_2}\left(M_1\right)}.
\end{aligned}
\end{equation}
$\mathbb{P}(W_{12}\leq r)$ is the coverage on $M_1$
by padding regions with distance $r$ around $M_2$.

We call the plots of the CDF of $W_{12}$ and $W_{21}$ \emph{coverage diagrams}
since they are linked to the coverage over $M_1$ and $M_2$.
The coverage diagram allows us to study how two manifolds 
are different from each other.
When $\ell_1=\ell_2$, the coverage diagram can be used as
a similarity measure for two manifolds. 
When $\ell_1\neq \ell_2$, the coverage diagram serves as 
a measure for quality of representing high dimensional objects
by low dimensional ones.
A nice property for coverage diagram is that 
we can approximate the CDF for $W_{12}$ and $W_{21}$
by a mesh of points (or points uniformly distributed) over $M_1$ and $M_2$.
In Figure~\ref{fig::exhelix} we consider a Helix dataset whose support has dimension 
$d=3$
and we compare two curves, a spiral curve (green)
and a straight line (orange), to represent the Helix dataset.
As can be seen from the coverage diagram (right panel),
the green curve has better coverage at each distance 
(compared to the orange curve)
so that the spiral curve provides a better representation for
the Helix dataset.

\begin{figure}
\includegraphics[width=2.5in]{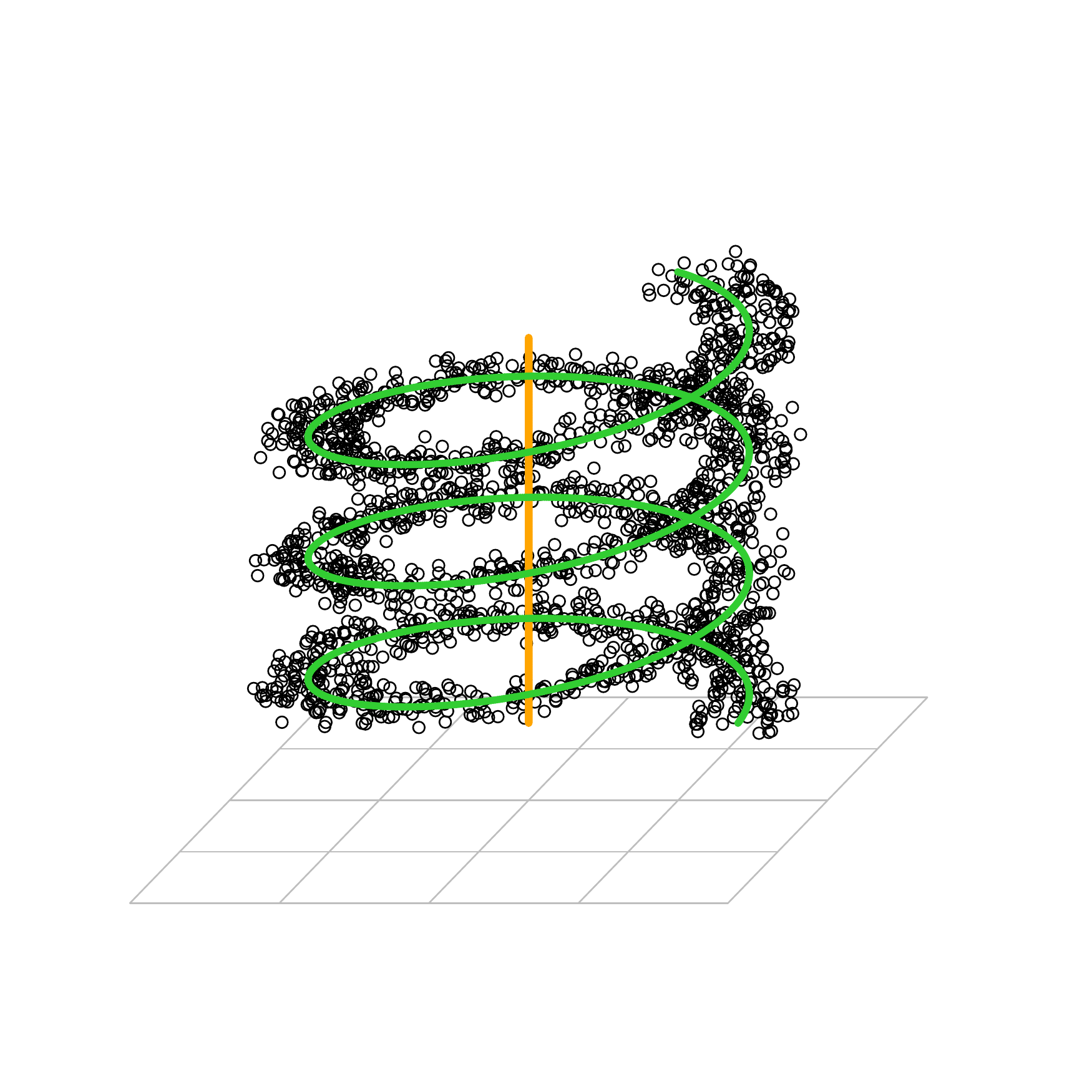}
\includegraphics[width=2.5in]{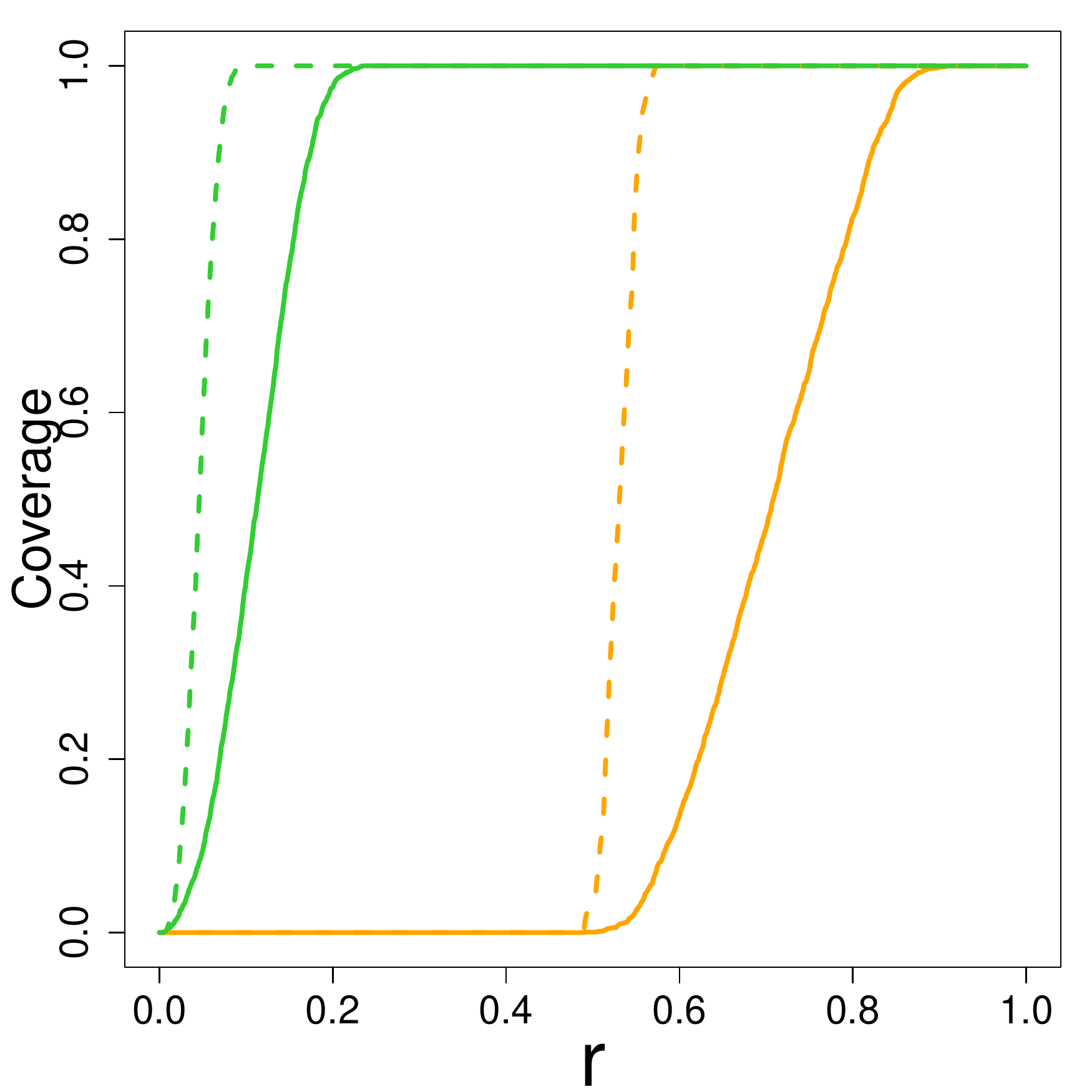}
\caption{The Helix dataset.
The original support for the Helix dataset (black dots) are a 3-dimensional regions.
We can use green spiral curves ($d=1$) to represent the regions.
Note that we also provide a bad representation using a straight line (orange).
The coverage plot reveals the quality for representation.
Left: the original data.
Right: the coverage plot for the spiral curve (green) versus a straight line (orange).
}
\label{fig::exhelix}
\end{figure}

In addition to the coverage diagram,
we can also use the following $\cL_1$ and $\cL_2$ losses as summary for
the difference:
\begin{equation}
\Loss_1(M_1, M_2)= \frac{\mathbb{E}(W_{12}+W_{21})}{2},\quad
\Loss_2 (M_1, M_2) = \frac{\mathbb{E}(W^2_{12}+W^2_{21})}{2}.
\end{equation}
The expectation is take over $U_{M_1}$ and $U_{M_2}$ and
both $M_1$ and $M_2$ here are fixed.
The risks in \eqref{eq::risk} are the expected losses:
\begin{equation}
\Risk_{1,n} = \mathbb{E}\left(\Loss_1(\hat{M}_n, M)\right), \quad
\Risk_{2,n} = \mathbb{E}\left(\Loss_2(\hat{M}_n, M)\right).
\end{equation}



\section{Theoretical Analysis}

In this section, we analyze the asymptotic behavior for 
the coverage risk and prove the consistency 
for estimating the coverage risk by the proposed method.
In particular, we derive the asymptotic properties for the density ridges.
We only focus on $\cL_2$ risk 
since by Jensen's inequality, the $\cL_2$ risk
can be bounded by the $\cL_1$ risk.

Before we state our assumption, we first define the orientation
of density ridges.
Recall that the density ridge $R$ is a collection of one dimensional
curves.
Thus, for each point $x\in R$, we can associate a unit vector $e(x)$
that represent the orientation of $R$ at $x$.
The explicit formula for $e(x)$ can be found in Lemma 1 of \cite{chen2014asymptotic}.


\textbf{Assumptions.}
\begin{itemize}
\item[\bf (R)] There exist $\beta_0, \beta_1, \beta_2,\delta_R>0$ such that
for all $x\in R\oplus \delta_R$,
\begin{equation}
\begin{aligned}
\lambda_2(x)\leq -\beta_1, \quad \lambda_1(x)\geq \beta_0-\beta_1,\quad
\norm{\nabla p(x)} \norm{ p^{(3)}(x)}_{\max}\leq \beta_0 (\beta_1-\beta_2),
\end{aligned}
\label{eq::A::eigen}
\end{equation}
where $\norm{ p^{(3)}(x)}_{\max}$ is the element wise norm to the third derivative.
And for each $x\in R$, $|e(x)^T \nabla p(x)|\geq \frac{\lambda_1(x)}{\lambda_1(x)-\lambda_2(x)}$.

\item[\bf (K1)] The kernel function $K\in\mathbf{BC}^3$ and is symmetric, non-negative and 
$$\int x^2K^{(\alpha)}(x)dx<\infty,\qquad \int \left(K^{(\alpha)}(x)\right)^2dx<\infty
$$ 
for all $\alpha=0,1,2,3$.
\item[\bf (K2)] The kernel function $K$ and its partial derivative satisfies condition $K_1$ in \cite{Gine2002}. Specifically, let 
\begin{equation}
\begin{aligned}
\mathcal{K} &=\left\{y\mapsto K^{(\alpha)}\left(\frac{x-y}{h}\right): x\in\mathbb{R}^d, h>0, |\alpha|=0,1,2\right\}\\
\end{aligned}
\end{equation}
We require that $\mathcal{K}$ satisfies
\begin{align}
\underset{P}{\sup} N\left(\mathcal{K}, L_2(P), \epsilon\norm{F}_{L_2(P)}\right)\leq \left(\frac{A}{\epsilon}\right)^v
\label{eq::VC}
\end{align}
for some positive number $A,v$, where $N(T,d,\epsilon)$ denotes the $\epsilon$-covering number of the metric space $(T,d)$ and $F$ is the envelope function of $\mathcal{K}$ and the supreme is taken over the whole $\mathbb{R}^d$. The $A$ and $v$ are usually called the VC characteristics of $\mathcal{K}$.
The norm $\norm{F}_{L_2(P)} = \sup_{P} \int|F(x)|^2dP(x)$.

\end{itemize}
Assumption (R) appears in \cite{chen2014asymptotic} and is very mild.
The first two inequality in \eqref{eq::A::eigen} are just the bound on eigenvalues.
The last inequality requires the density around
ridges to be smooth.
The latter part of (R) requires the direction of ridges to be similar to the gradient
direction.
Assumption (K1) is the common condition for kernel density estimator
see e.g. \cite{wasserman2006all} and \cite{scott2009multivariate}.
Assumption (K2) is to regularize the classes of kernel functions
that is widely assumed \citep{Einmahl2005,genovese2014nonparametric,chen2014nonparametric};
any bounded kernel function with compact support satisfies this condition.
Both (K1) and (K2) hold for the Gaussian kernel.

Under the above condition, we derive the rate of convergence
for the $\cL_2$ risk.
\begin{thm}
Let $\Risk_{2,n}$ be the $\cL_2$ coverage
risk for estimating the density ridges and level sets.
Assume (K1--2) and (R) and $p$ is at least four times bounded differentiable.
Then as $n\rightarrow \infty$, $h\rightarrow 0$ and $\frac{\log n}{nh^{d+6}}\rightarrow 0$
\begin{align*}
\Risk_{2,n} =
B_R^2 h^4 + \frac{\sigma_R^2}{nh^{d+2}} +o(h^4) + o\left(\frac{1}{nh^{d+2}}\right),
\end{align*}
for some $B_R$ and $\sigma_R^2$ that depends only on the density $p$ 
and the kernel function $K$.
\label{thm::rate}
\end{thm}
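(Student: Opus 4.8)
The plan is to reduce the coverage risk to a pointwise analysis of the deviation $\hat R_n - R$ along the normal directions of $R$, and then to integrate the resulting bias–variance expansion over $R$. First I would recall from Lemma~\ref{lem::prop} that both $W_n$ and $\tilde W_n$ are controlled by $\Haus(\hat R_n, R)$, which under (R) and (K1--2) is $O_P\!\left(h^2 + \sqrt{\log n/(nh^{d+2})}\right)$ by the stability theory for ridges in \cite{chen2014asymptotic}; the rate condition $\log n/(nh^{d+6})\to 0$ is exactly what guarantees that the estimated Hessian and projected gradient are uniformly close to their population versions in a $\delta_R$-tube, so that $\hat R_n$ is a smooth curve lying in $R\oplus\delta_R$ with a well-defined nearest-point projection onto $R$. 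This lets me linearize: for a point $u\in R$ with orientation $e(u)$, the nearest point of $\hat R_n$ is, to leading order, $u + \Delta_n(u)$ where $\Delta_n(u)$ is the solution of the perturbed ridge equation projected onto the normal space $V(u)V(u)^T$, and an implicit-function-theorem argument (differentiating $V V^T\nabla p = 0$) gives $\Delta_n(u) = -\big(V V^T H V V^T\big)^{-1} V V^T \nabla(\hat p_n - p)(u) + \text{higher order}$.

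The key steps, in order: (i) establish the tube/projection setup and the linearization of the ridge map, with explicit error control so the remainder is $o_P$ of the leading terms; (ii) substitute the standard KDE expansion $\E(\hat p_n - p)(u) = \tfrac12 h^2 \mathrm{tr}(\nabla^2 p(u)\,\mu_2(K)) + o(h^2)$ and $\Var(\nabla\hat p_n(u)) = \sigma^2_K p(u)/(nh^{d+2}) + o(1/(nh^{d+2}))$ into $\Delta_n(u)$, yielding $\E\|\Delta_n(u)\|^2 = b^2(u) h^4 + v(u)/(nh^{d+2}) + \text{l.o.t.}$ for functions $b,v$ built from $p$ and $K$ via the matrix $(V V^T H V V^T)^{-1}$; (iii) show $\E(W_n^2\mid \hat R_n)$ and $\E(\tilde W_n^2\mid\hat R_n)$ both equal $\int_R \|\Delta_n(u)\|^2\,\tfrac{d\mu_1(u)}{\mu_1(R)}$ up to lower order — here I use that the length element of $\hat R_n$ differs from that of $R$ by a $1 + O_P(\Haus)$ factor, so the two coverage directions contribute the same leading term; (iv) take the outer expectation and integrate over $R$, setting $B_R^2 = \int_R b^2(u)\,d\mu_1(u)/\mu_1(R)$ and $\sigma_R^2 = \int_R v(u)\,d\mu_1(u)/\mu_1(R)$; (v) collect the cross terms and remainders and verify, using the uniform bounds from (i)–(ii) and the rate condition, that they are $o(h^4) + o(1/(nh^{d+2}))$.

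I expect the main obstacle to be step (i): making the linearization of $\hat R_n$ around $R$ rigorous and uniform. One must show that with probability tending to one the map sending $u\in R$ to the nearby critical point of the projected gradient of $\hat p_n$ is well-defined, single-valued, and smooth, and control its second-order remainder uniformly in $u$ — this is where assumption (R) (the lower bound $\beta_1$ on $|\lambda_2|$ and the eigen-gap, ensuring the relevant Jacobian $V V^T H V V^T$ is boundedly invertible) and the uniform rates from empirical-process theory under (K2) (the VC condition giving $\sup_x \|\nabla^\alpha(\hat p_n - p)(x)\|$ bounds for $|\alpha|\le 2$ via \cite{Gine2002,Einmahl2005}) do the heavy lifting. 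A secondary subtlety is matching the two coverage integrals in step (iii) without the length elements of $\hat R_n$ introducing a spurious first-order term; this is handled by noting the Jacobian of the projection between the two curves is $1 + O(\Haus^2)$ because the curves are tangent to leading order. The level-set case referenced in the statement is analogous, replacing the ridge map by the implicit equation $p = c$ and the normal space by $\nabla p/\|\nabla p\|$, and I would remark that it follows by the same argument rather than repeat it.
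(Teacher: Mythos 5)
Your proposal follows essentially the same route as the paper's proof: linearize the ridge perturbation along the normal space via the matrix $\bigl(N^T H N\bigr)^{-1}$ acting on $\nabla(\hat p_n - p)$ (the paper imports this as Theorem 3 of \cite{chen2014asymptotic} rather than rederiving it from the implicit function theorem), plug in the standard bias--variance expansion for the KDE gradient, and show the two coverage directions agree to leading order by comparing parametrizations of the two curves --- which the paper isolates as its Lemma~\ref{lem::curve} with exactly the tangent/length/projection similarity conditions you describe. The steps, the identified obstacles (uniform invertibility of the normal Hessian under (R), uniform derivative rates under (K2), and the symmetry of the two coverage integrals), and the resulting constants $B_R^2$ and $\sigma_R^2$ as averages over $U_R$ all match the paper's argument.
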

The rate in Theorem~\ref{thm::rate}
shows a bias-variance decomposition.
The first term involving $h^4$ is the bias term
while the latter term is the variance part.
Thanks to the Jensen's inequality, the rate of convergence for $\cL_1$ risk
is the square root of the rate Theorem~\ref{thm::rate}.
Note that we require the smoothing parameter $h$ to decay slowly to $0$
by $\frac{\log n}{nh^{d+6}}\rightarrow 0$.
This constraint comes from the uniform bound
for estimating third derivatives for $p$.
We need this constraint since we need the smoothness
for estimated ridges to converge to the smoothness for the true ridges.
Similar result for density level set appears in 
\cite{Cadre2006,mason2009asymptotic}.

By Lemma~\ref{lem::prop}, we can upper bound the $\cL_2$ risk
by expected square of the Hausdorff distance which gives the rate
\begin{equation}
\begin{aligned}
\Risk_{2,n} \leq\mathbb{E}\left( \Haus^2 (\hat{R}_n, R)\right) &= 
O(h^4) +O\left(\frac{\log n}{nh^{d+2}}\right)
\end{aligned}
\end{equation}
The rate under Hausdorff distance for density ridges can be found in 
\cite{chen2014asymptotic} and the rate for density ridges appears in \cite{Cuevas2006}.
The rate induced by Theorem~\ref{thm::rate}
agrees with the bound from the Hausdorff distance and has a slightly better
rate for variance (without a log-n factor).
This phenomena is similar to the MISE and $\cL_{\infty}$ error
for nonparametric estimation for functions.
The MISE converges slightly faster (by a log-n factor) 
than square to the $\cL_{\infty}$ error.


Now we prove the consistency of the risk estimators.
In particular, we prove the consistency for the smoothed
bootstrap. The case of data splitting can be proved in the similar way.
\begin{thm}
Let $\Risk_{2,n}$ be the $\cL_2$ coverage
risk for estimating the density ridges and level sets.
Let $\hat{\Risk}_{2,n}$ be the corresponding
risk estimator by the smoothed bootstrap.
Assume (K1--2) and (R) and $p$ is at least four times bounded differentiable.
Then as $n\rightarrow \infty$, $h\rightarrow 0$ and $\frac{\log n}{nh^{d+6}}\rightarrow 0$,
\begin{align*}
\frac{\hat{\Risk}_{2,n}-\Risk_{2,n}}{\Risk_{2,n}} \overset{P}{\rightarrow}0.
\end{align*}
\label{thm::estimate}
\end{thm}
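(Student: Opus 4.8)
The plan is to run the argument behind Theorem~\ref{thm::rate} a second time, now \emph{inside} the smoothed‑bootstrap world, and then show that the constants produced there converge to those of Theorem~\ref{thm::rate}. Conditionally on $X_1,\dots,X_n$, the smoothed bootstrap draws $X_1^*,\dots,X_n^*\sim\hat p_n$ and forms a KDE $\hat p_n^*$ with the same bandwidth $h$; hence $(\hat R_n^*,\hat R_n)$ plays exactly the role that $(\hat R_n,R)$ plays in Theorem~\ref{thm::rate}, with the ``true'' density replaced by $\hat p_n$. So my first step is to check that $\hat p_n$ satisfies, with probability tending to one, the hypotheses used in that proof: a version of Assumption~(R) with the constants $\beta_0,\beta_1,\beta_2,\delta_R$ deflated by a fixed factor — valid because (R) consists of strict inequalities and $\sup_{x\in R\oplus\delta_R}\|\hat p_n^{(\alpha)}(x)-p^{(\alpha)}(x)\|\overset{P}{\to}0$ for $|\alpha|\le 3$ under (K1), (K2) and $\frac{\log n}{nh^{d+6}}\to0$ (the uniform bounds already invoked in Theorem~\ref{thm::rate}) — together with enough smoothness of $\hat p_n$, automatic for a smooth kernel such as the Gaussian. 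Making the remainder terms of that proof uniform over a fixed $C^3$‑neighbourhood of $p$, the conditional expansion becomes
\begin{equation*}
\hat\Risk_{2,n}=\hat B_R^2 h^4+\frac{\hat\sigma_R^2}{nh^{d+2}}+o_P(h^4)+o_P\!\left(\frac{1}{nh^{d+2}}\right),
\end{equation*}
where $\hat B_R=B_R[\hat p_n]$ and $\hat\sigma_R^2=\sigma_R^2[\hat p_n]$ are the \emph{same} functionals of density and kernel appearing in Theorem~\ref{thm::rate}, now evaluated at $\hat p_n$.

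The second step is to prove $\hat B_R\overset{P}{\to}B_R$ and $\hat\sigma_R^2\overset{P}{\to}\sigma_R^2$. Both $B_R[\cdot]$ and $\sigma_R^2[\cdot]$ are integrals over the ridge of the density of expressions built from $\nabla p$, $\nabla\nabla p$, higher derivatives of $p$, the eigenpairs of the Hessian, and the orientation field $e(x)$, contracted with fixed moments of $K$ and $K^{(\alpha)}$; the hatted versions are the same integrals over $\hat R_n$ with $\hat p_n$ in place of $p$. I would establish convergence in two pieces. \emph{(i) Integrand stability:} on the tube $R\oplus\delta_R$ the eigenvalues and, on the spectral gap supplied by (R), the eigenvectors of the Hessian are locally Lipschitz functions of $\nabla\nabla p$, so the $C^3$‑closeness of $\hat p_n$ to $p$ — with the fourth‑derivative contributions entering only through kernel‑smoothed (hence derivative‑convergent) bias terms of the form $\E[\hat p_n^*\mid X_1,\dots,X_n]-\hat p_n$ — forces uniform closeness of every ingredient. \emph{(ii) Domain stability:} $\Haus(\hat R_n,R)=o_P(1)$ by \cite{chen2014asymptotic}, and since $R$ is a compact $C^2$ curve of positive reach under (R), $\hat R_n$ is with probability tending to one a $C^1$‑small normal graph over $R$; integrating a uniformly convergent integrand over this graph and changing variables via the nearest‑point projection, whose Jacobian tends to $1$, gives convergence of the integrals. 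Combining (i) and (ii) yields the claim.

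For the third step, write $x_n=h^4$ and $y_n=\frac{1}{nh^{d+2}}$, both positive and deterministic; note that $\frac{\log n}{nh^{d+6}}\to0$ forces $y_n=o(x_n)$, so $\Risk_{2,n}\asymp h^4$ in the non‑degenerate case $B_R^2>0$. By Theorem~\ref{thm::rate}, $\Risk_{2,n}=B_R^2x_n+\sigma_R^2y_n+o(x_n)+o(y_n)$, and by Steps~1--2, $\hat\Risk_{2,n}=\hat B_R^2x_n+\hat\sigma_R^2y_n+o_P(x_n)+o_P(y_n)$ with $\hat B_R^2-B_R^2=o_P(1)$ and $\hat\sigma_R^2-\sigma_R^2=o_P(1)$. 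Hence
\begin{equation*}
\left|\frac{\hat\Risk_{2,n}}{\Risk_{2,n}}-1\right|\le\frac{|\hat B_R^2-B_R^2|\,x_n+|\hat\sigma_R^2-\sigma_R^2|\,y_n+o_P(x_n)+o_P(y_n)}{B_R^2x_n+\sigma_R^2y_n+o(x_n)+o(y_n)}.
\end{equation*}
Eventually the denominator is at least $c(x_n+y_n)$ for a fixed $c>0$, so every summand on the right is bounded by a fixed multiple of a quantity that is $o_P(1)$ (for instance $|\hat B_R^2-B_R^2|\,x_n/[c(x_n+y_n)]\le c^{-1}|\hat B_R^2-B_R^2|$), irrespective of the relative sizes of $x_n$ and $y_n$; thus the ratio tends to $1$ in probability. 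The data‑splitting estimator is handled the same way, replacing $n$ by $n/2$ in the half‑sample fits, which changes the constants but not the argument.

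The step I expect to be the real obstacle is making the expansion of Theorem~\ref{thm::rate} hold \emph{uniformly} over a $C^3$‑neighbourhood of $p$, so that its deterministic $o(\cdot)$ remainders turn into honest $o_P(\cdot)$ when the ``density'' is the random $\hat p_n$, together with verifying that the highest‑order derivatives demanded by the bias constant $B_R$ enter only through kernel‑smoothed quantities whose derivatives do converge under the given bandwidth rate — this is precisely where the strictness of (R), the positive reach of $R$, and the uniform‑in‑bandwidth control of the KDE must all be combined.
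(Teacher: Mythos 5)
Your proposal follows essentially the same route as the paper's proof: apply Theorem~\ref{thm::rate} conditionally in the bootstrap world to obtain an expansion with plug-in constants $\hat{B}_R^2$ and $\hat{\sigma}_R^2$, show these converge in probability to $B_R^2$ and $\sigma_R^2$ by combining uniform $C^3$-closeness of $\hat{p}_n$ to $p$ (your integrand-stability step, the paper's bound $\sup_x|\Omega(x)-\hat{\Omega}_n(x)|=O(\norm{\hat{p}_n-p}_{3,\max})$) with a change of variables between $\hat{R}_n$ and $R$ (your domain-stability step, the paper's Lemma~\ref{lem::curve}), and then divide by the lower bound on $\Risk_{2,n}$. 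The only notable difference is that you are more explicit than the paper about needing the remainders of Theorem~\ref{thm::rate} to hold uniformly over a $C^3$-neighbourhood of $p$, a point the paper passes over silently.
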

Theorem~\ref{thm::estimate} proves the consistency
for risk estimation using the smoothed bootstrap.
This also leads to the consistency for
data splitting.



\section{Applications}

\subsection{Simulation Data}

\begin{figure}
\includegraphics[width=1.3in]{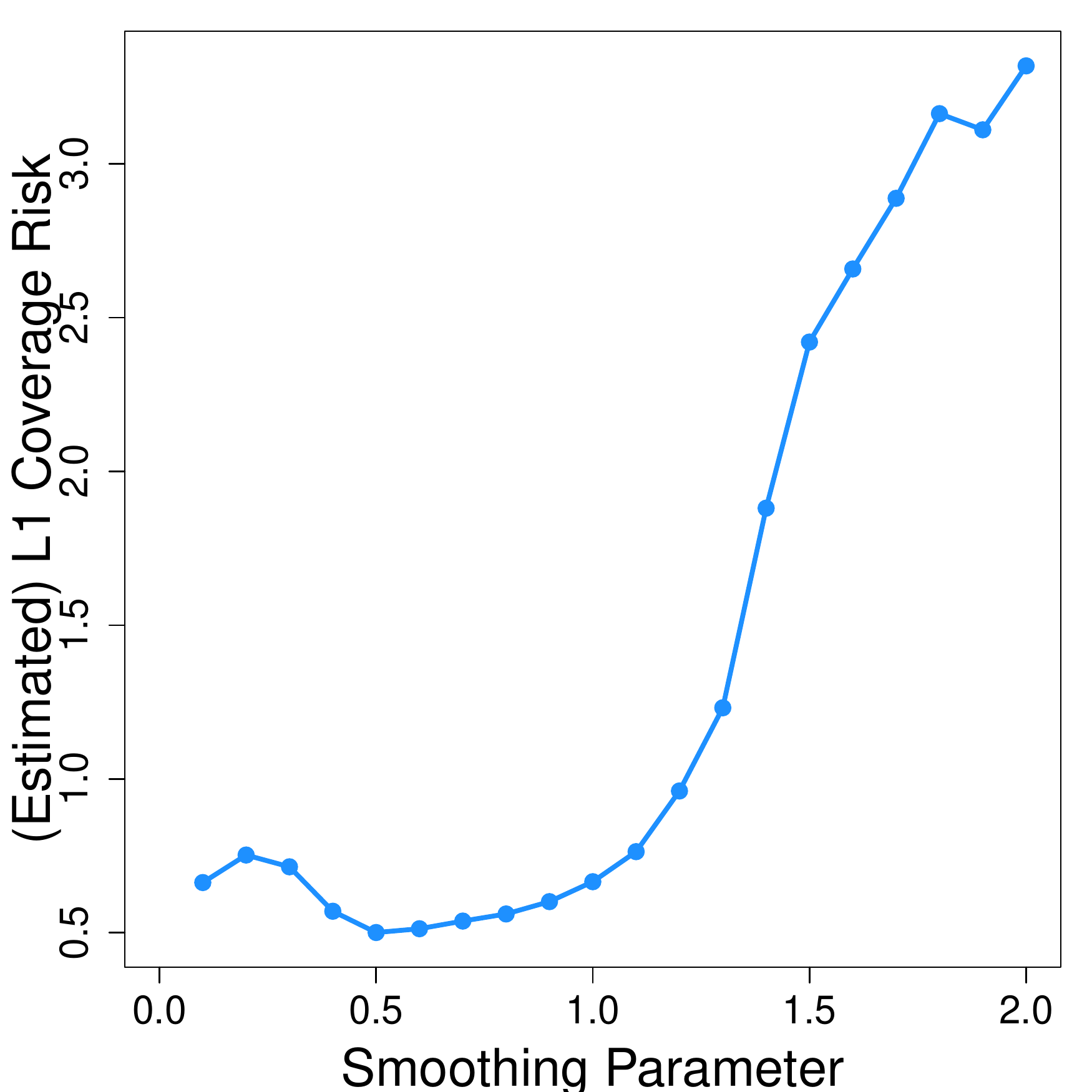}
\includegraphics[width=1.3in]{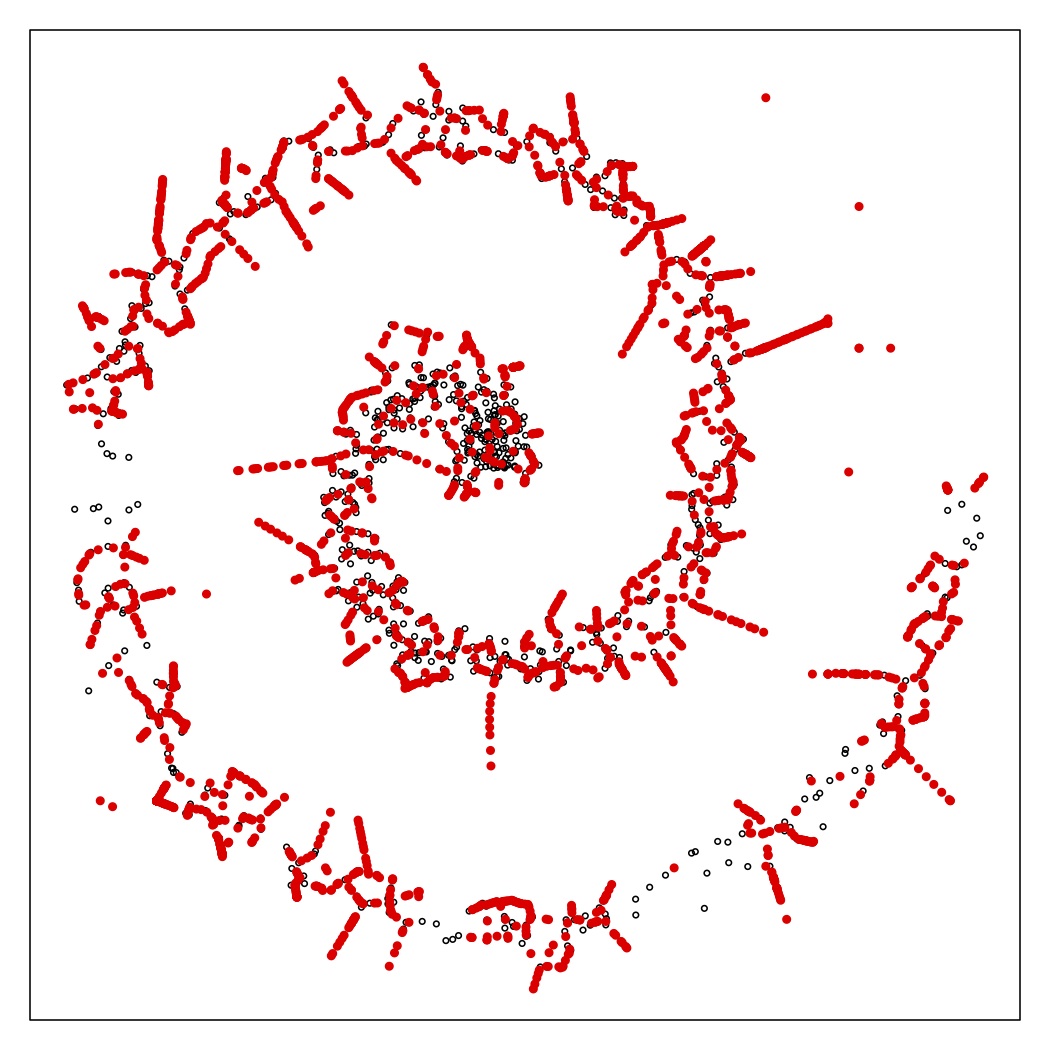}
\includegraphics[width=1.3in]{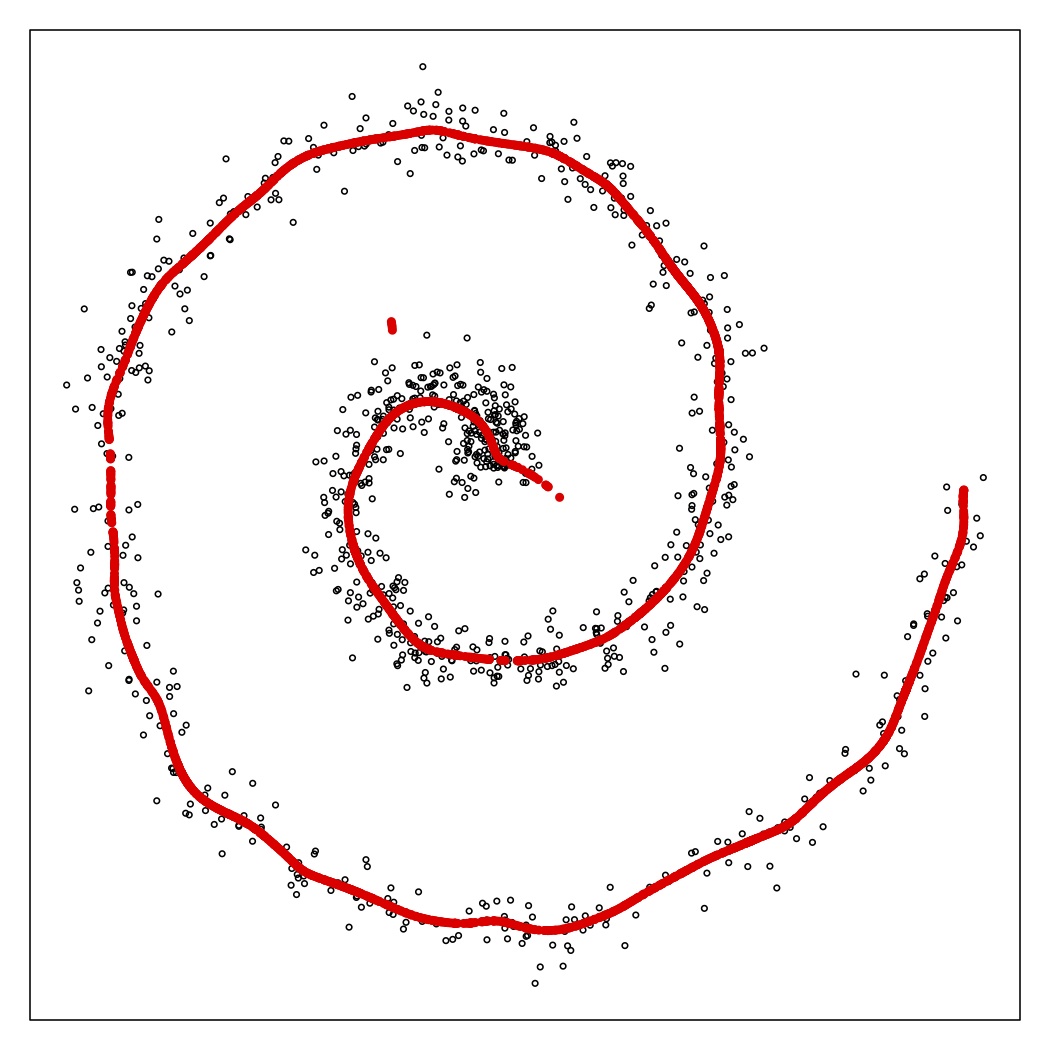}
\includegraphics[width=1.3in]{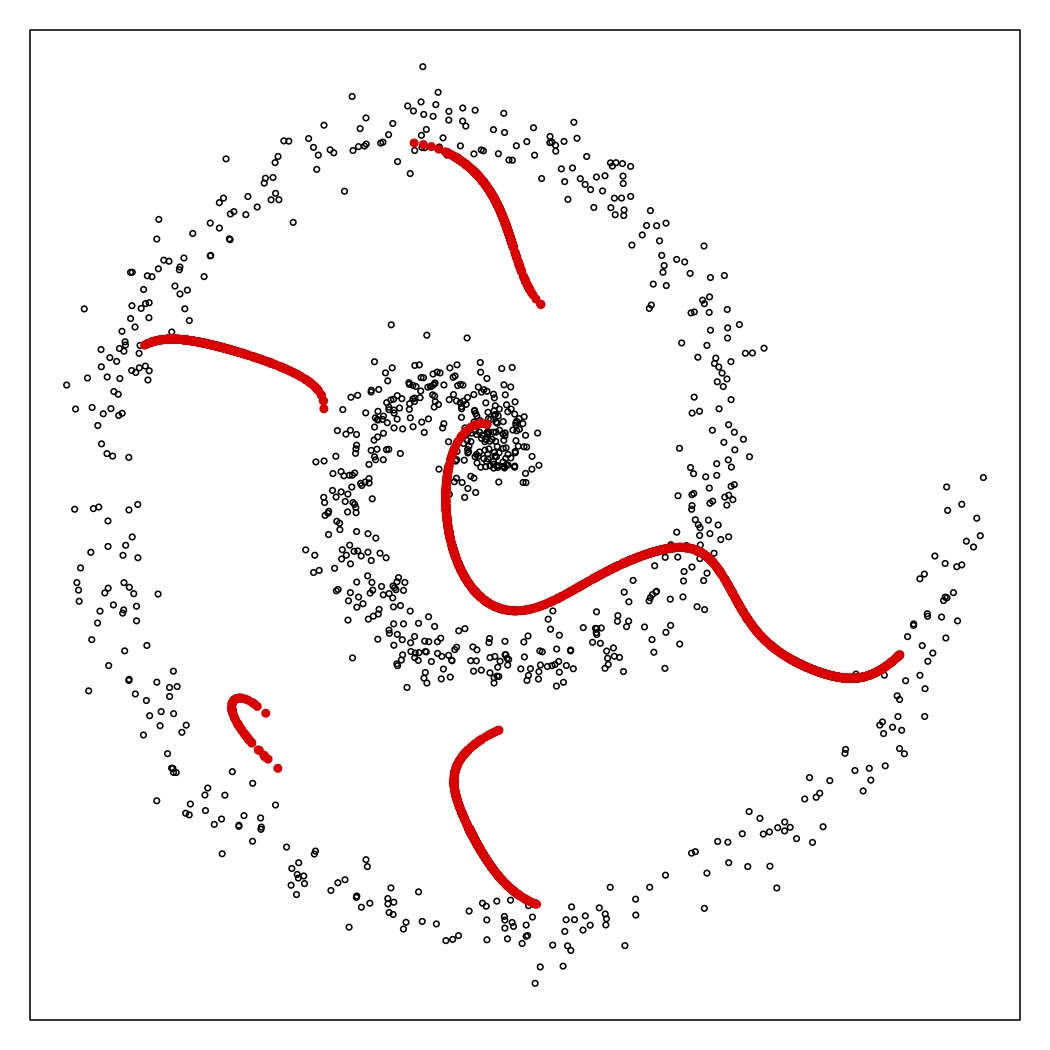}\\
\includegraphics[width=1.3in]{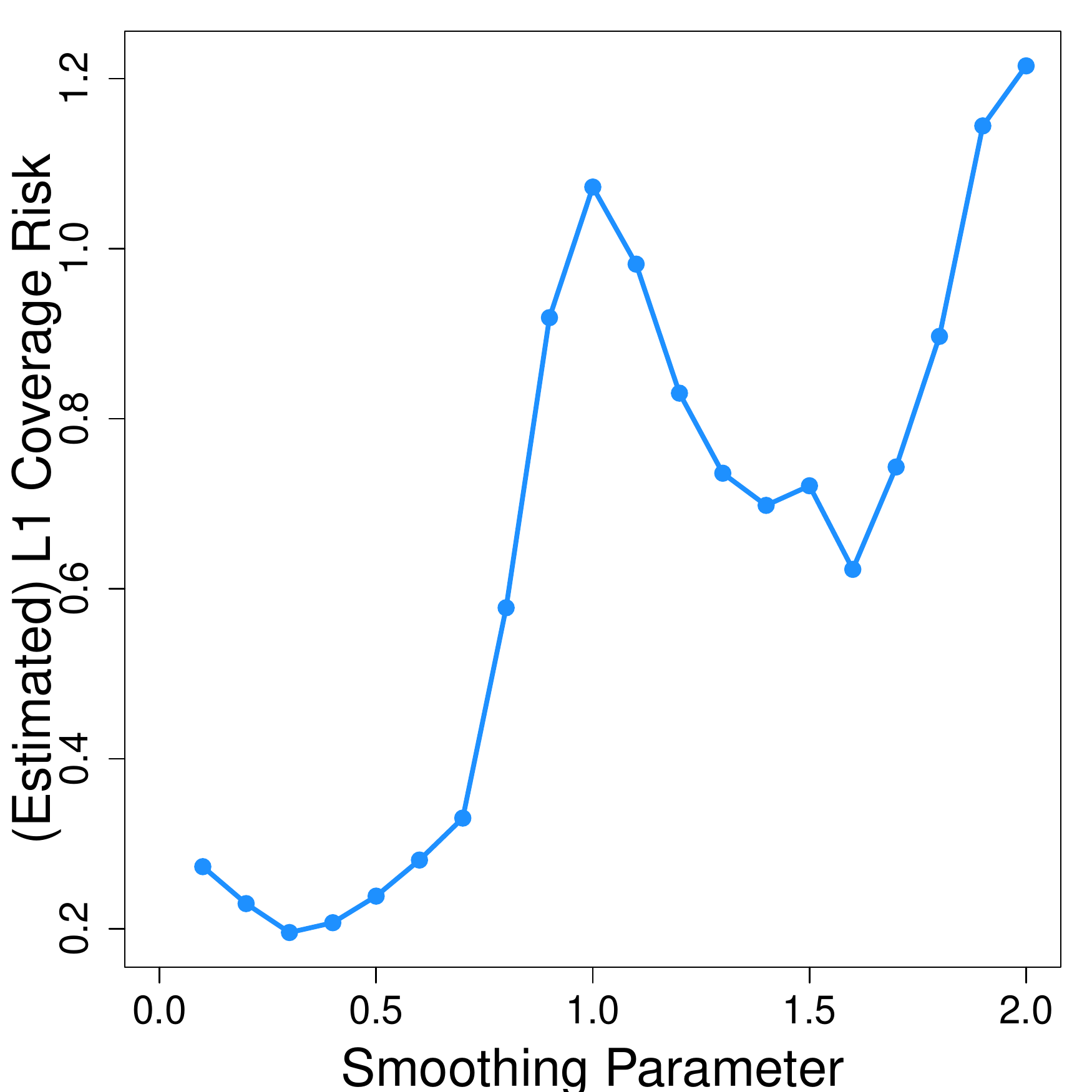}
\includegraphics[width=1.3in]{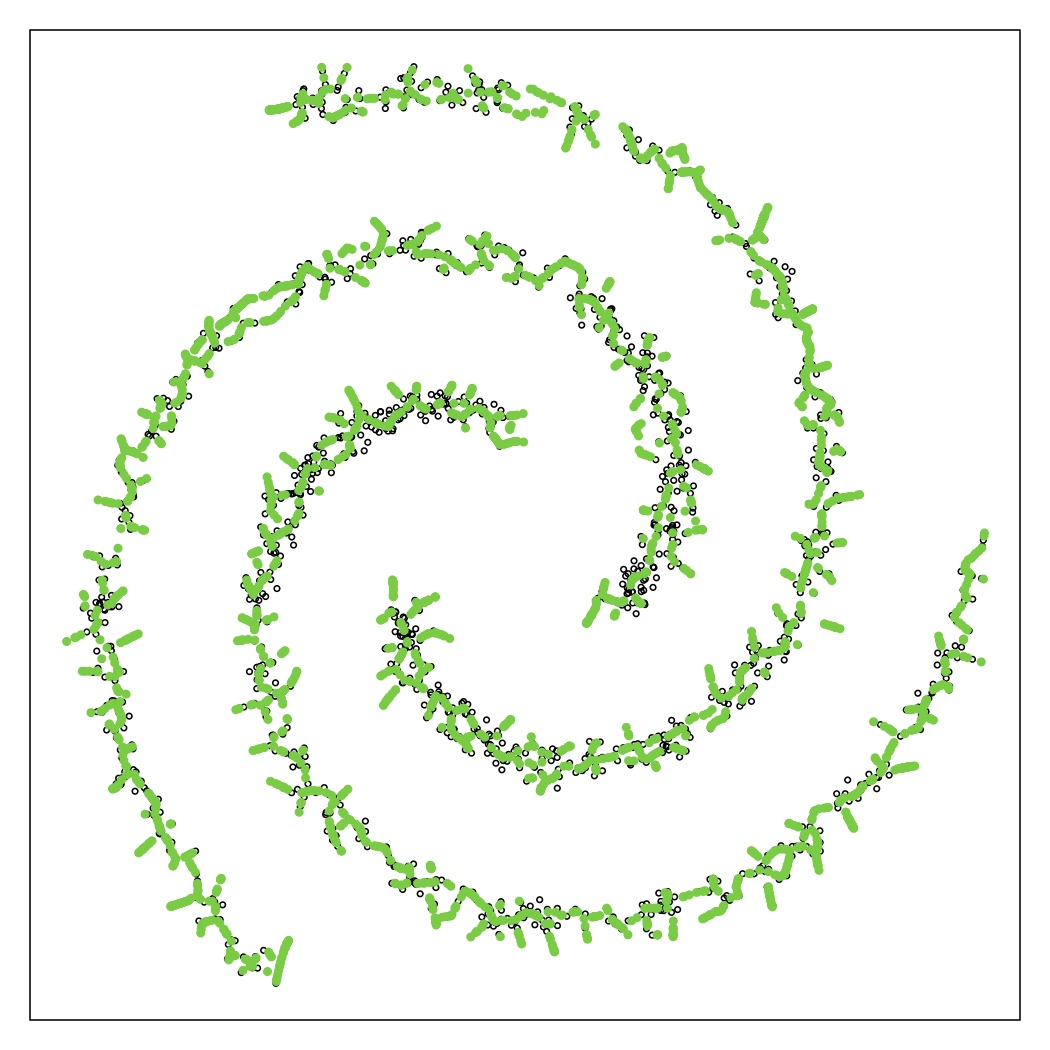}
\includegraphics[width=1.3in]{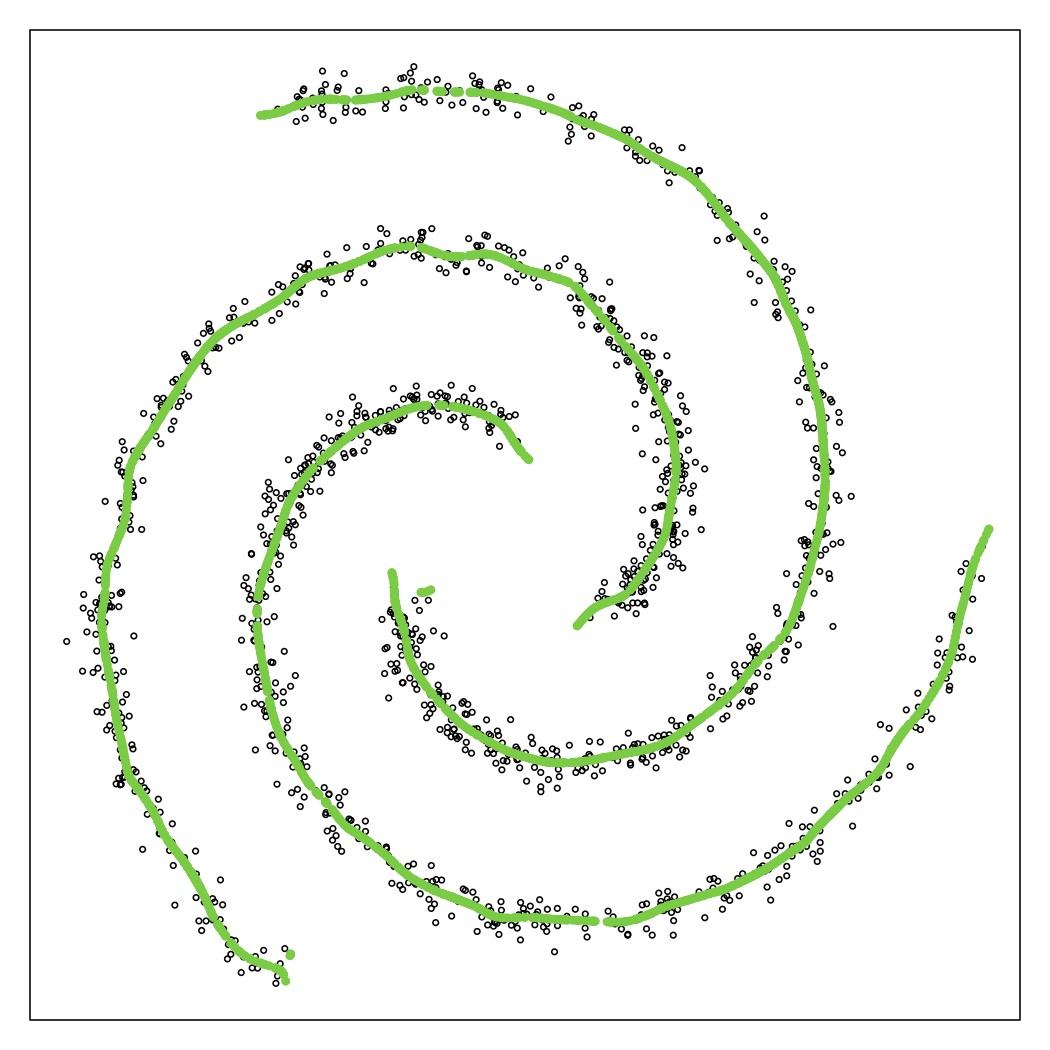}
\includegraphics[width=1.3in]{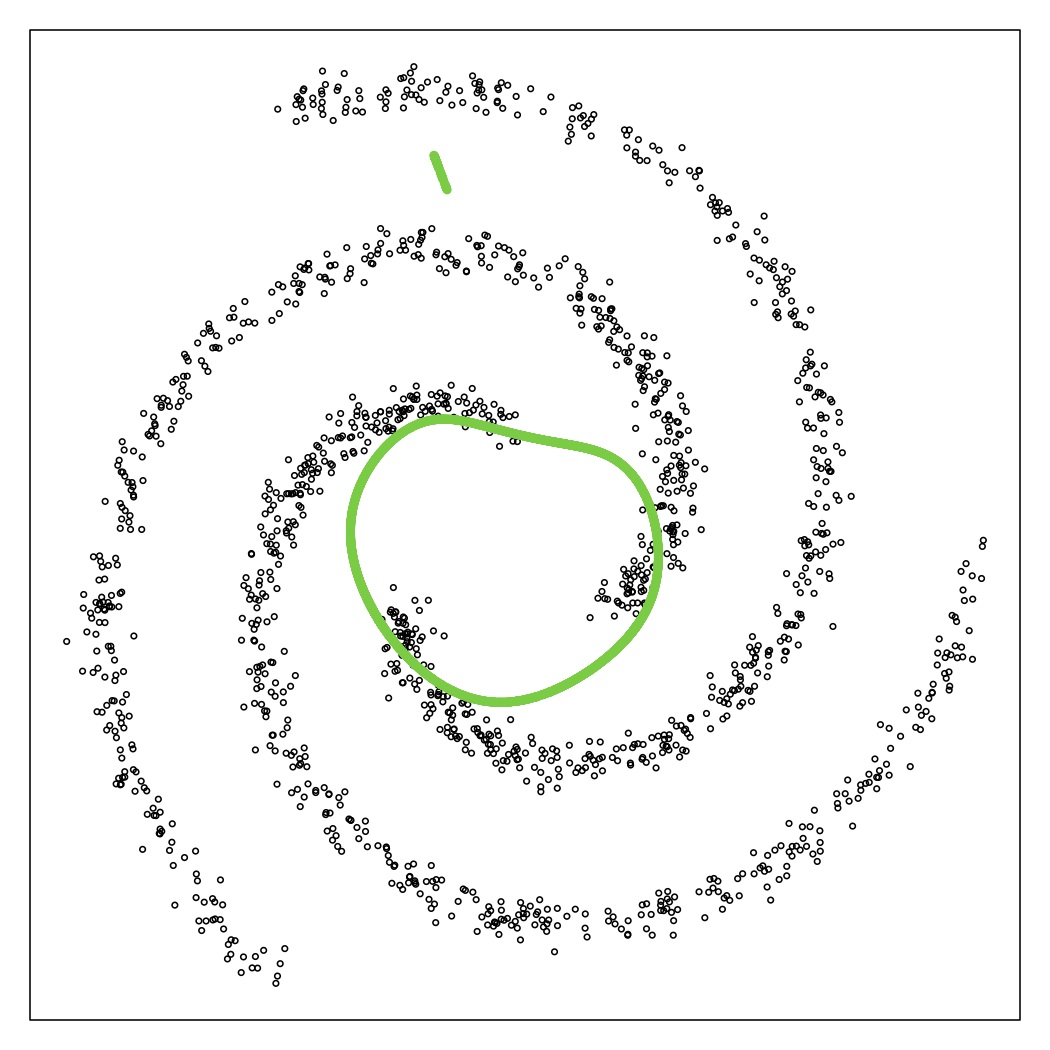}\\
\includegraphics[width=1.3in]{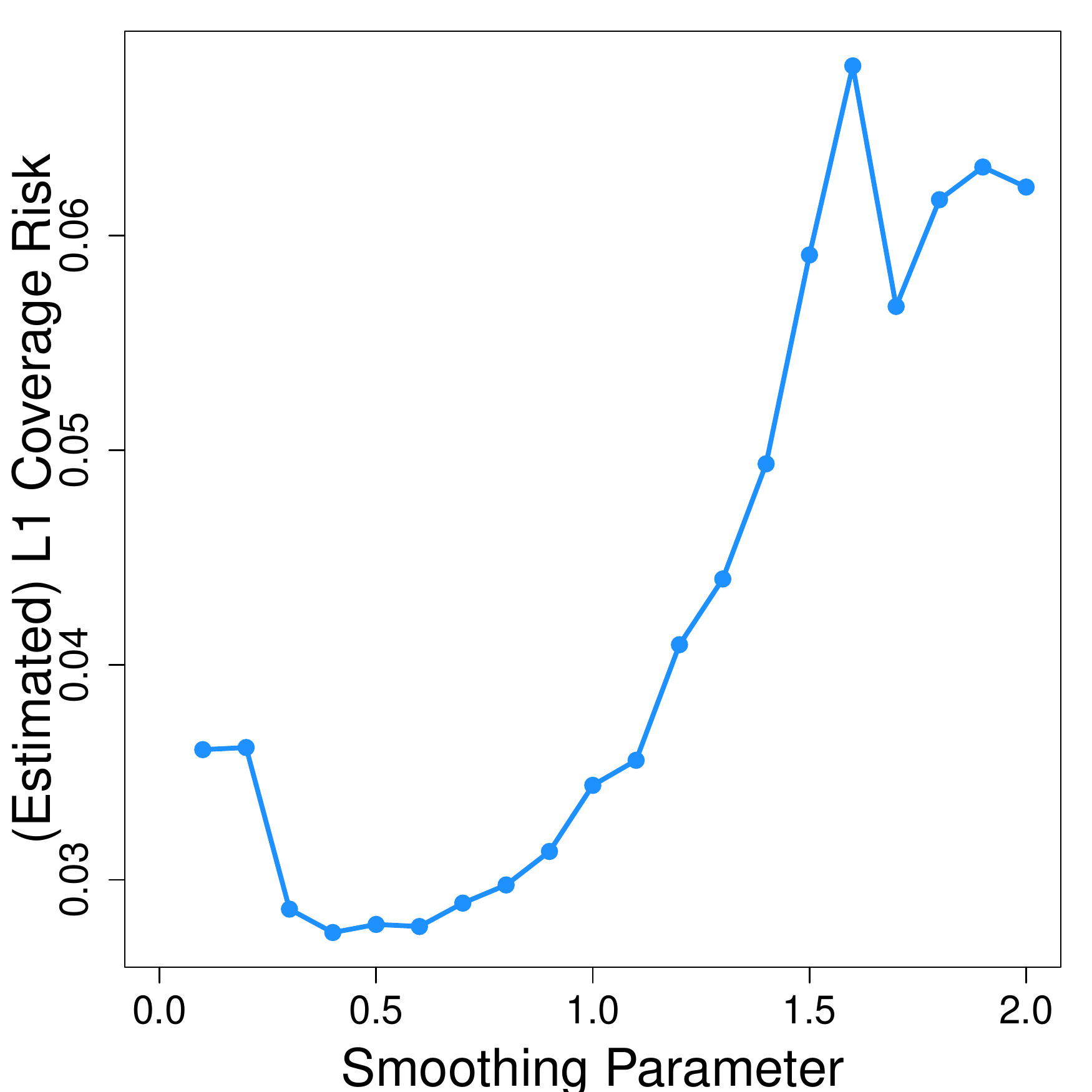}
\includegraphics[width=1.3in]{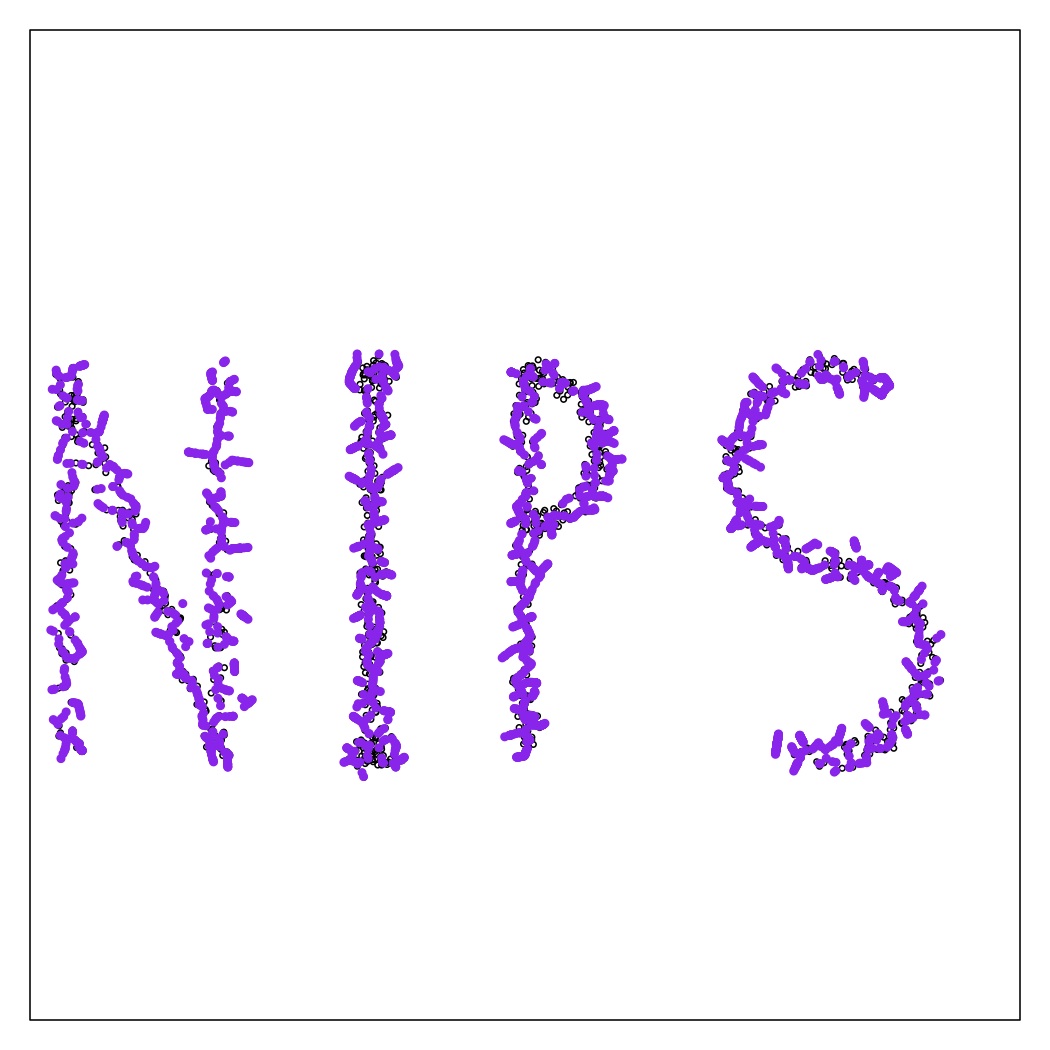}
\includegraphics[width=1.3in]{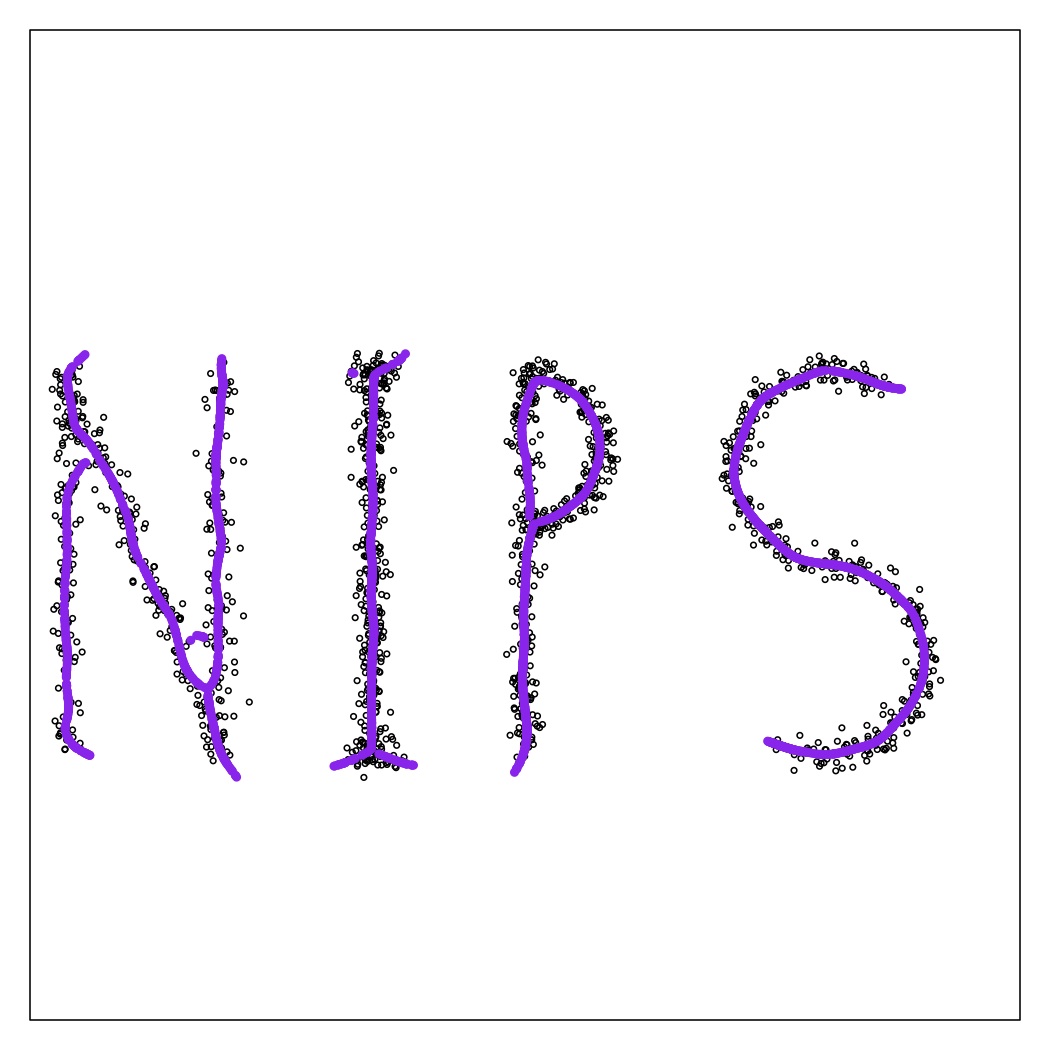}
\includegraphics[width=1.3in]{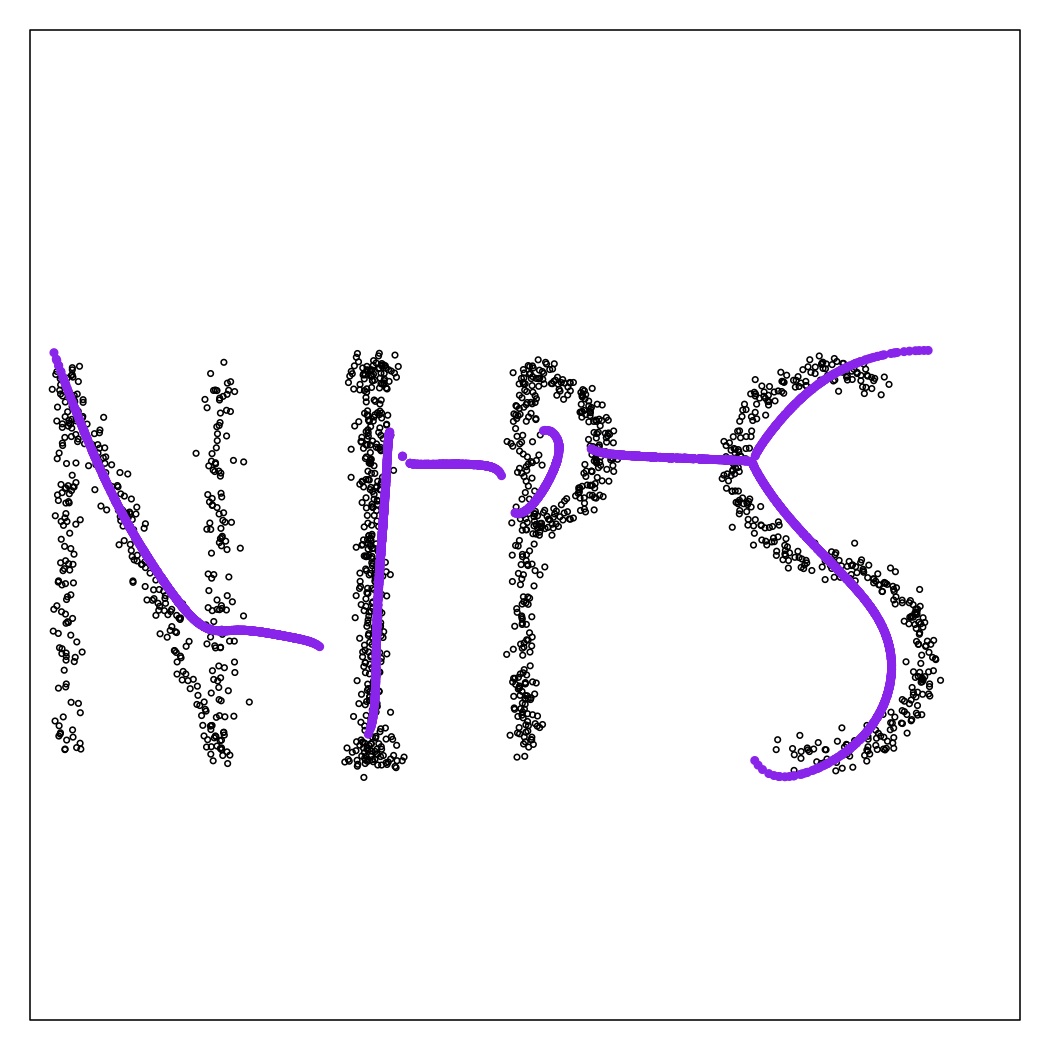}
\caption{
Three different simulation datasets.
Top row: the spiral dataset.
Middle row: the three spirals dataset.
Bottom row: NIPS character dataset.
For each row, the leftmost panel shows 
the estimated $\cL_1$ coverage risk
using data splitting.
Then the rest three panels, are the result using different smoothing parameters.
From left to right, we show the result for
under-smoothing, optimal smoothing (using the coverage risk), and over-smoothing.
}
\label{fig::exs1}
\end{figure}


We now apply the data splitting technique \eqref{eq::h_select} to choose the smoothing
bandwidth for density ridge estimation.
The density ridge estimation can be done by the subspace constrain
mean shift algorithm \citep{Ozertem2011}.
We consider three famous datasets: the spiral dataset, the three spirals dataset
and a `NIPS' dataset.


Figure~\ref{fig::exs1} shows the result for the three simulation datasets.
The top row is the spiral dataset; the middle row is the three spirals dataset;
the bottom row is the NIPS character dataset.
For each row, from left to right the first panel is 
the estimated $\cL_1$ risk by using data splitting.
The second to fourth panels are under-smoothing, optimal smoothing, 
and over-smoothing.
Note that we also remove the ridges whose density is below $0.05\times \max_x\hat{p}_n(x)$
since they behave like random noise.
As can be seen easily, the optimal bandwidth allows
the density ridges to capture the underlying structures
in every dataset.
On the contrary, the under-smoothing and the over-smoothing does not 
capture the structure and have a higher risk.

\subsection{Cosmic Web}

Now we apply our technique to the Sloan Digital Sky Survey, a huge
dataset that contains millions of galaxies.
In our data, each point is an observed galaxy with three features:
\begin{itemize}
\item z: the redshift, which is the distance from the galaxy to Earth.
\item RA: the right ascension, which is the longitude of the Universe.
\item dec: the declination, which is the latitude of the Universe.
\end{itemize}
These three features $(z,RA,dec)$ uniquely determine the location of a given galaxy.

To demonstrate the effectiveness of our method,
we select a 2-D slice of our Universe at redshift $z=0.050-0.055$
with $(RA, dec) \in [200, 240] \times [0,40]$.
Since the redshift difference is very tiny, we ignore the redshift value
of the galaxies within this region and treat them as a 2-D data points.
Thus, we only use $RA$ and $dec$.
Then we apply the SCMS algorithm 
(version of \citep{chen2015cosmic}) 
with data splitting method introduced in section \ref{sec::tuning}
to select the smoothing parameter $h$.
The result is given in Figure~\ref{fig::cosmic2}.
The left panel provides the estimated coverage risk at different smoothing bandwidth.
The rest panels give the result for under-smoothing (second panel), 
optimal smoothing (third panel) and over-smoothing (right most panel).
In the third panel of Figure~\ref{fig::cosmic2},
we see that the SCMS algorithm detects the filament structure
in the data.

\begin{figure}
\includegraphics[width=1.3in]{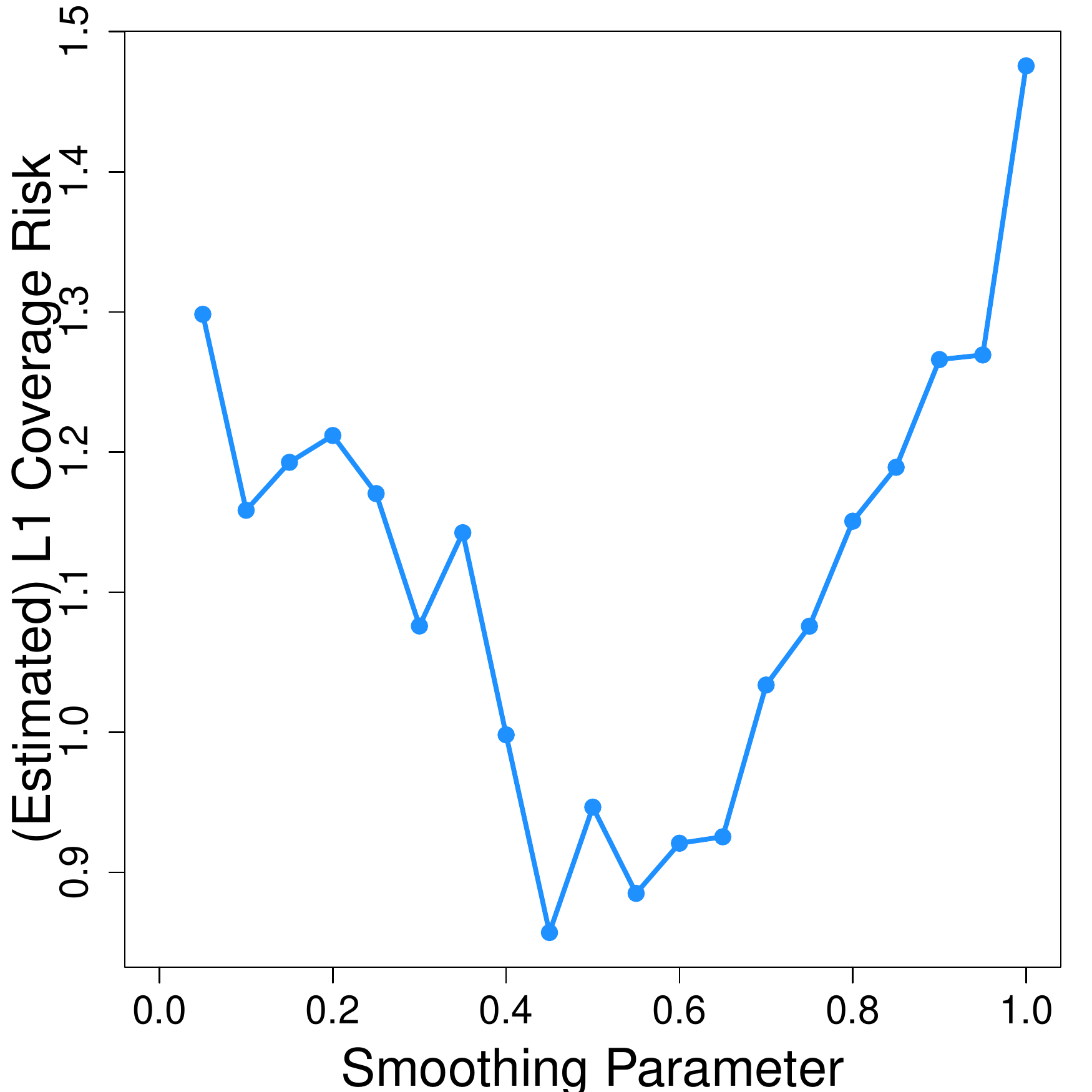}
\includegraphics[width=1.3in]{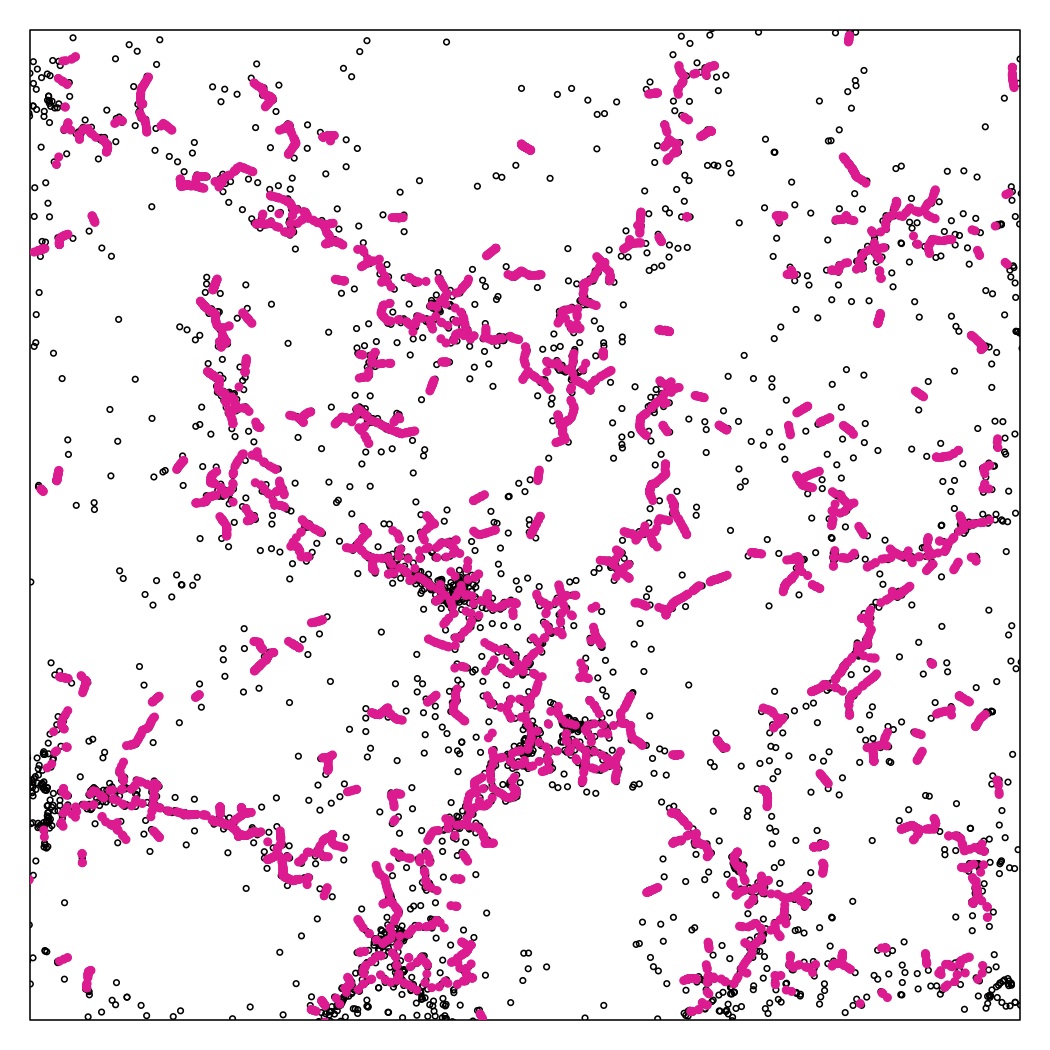}
\includegraphics[width=1.3in]{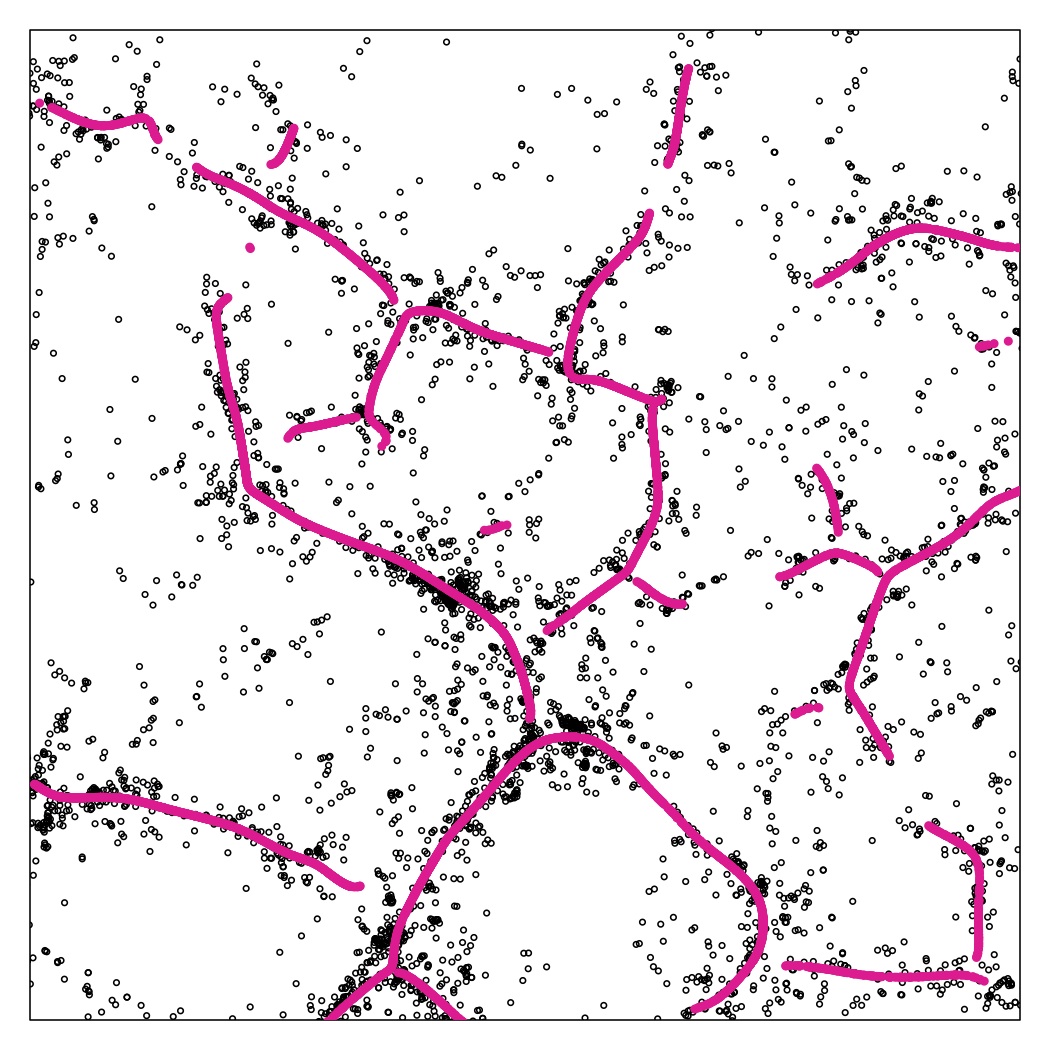}
\includegraphics[width=1.3in]{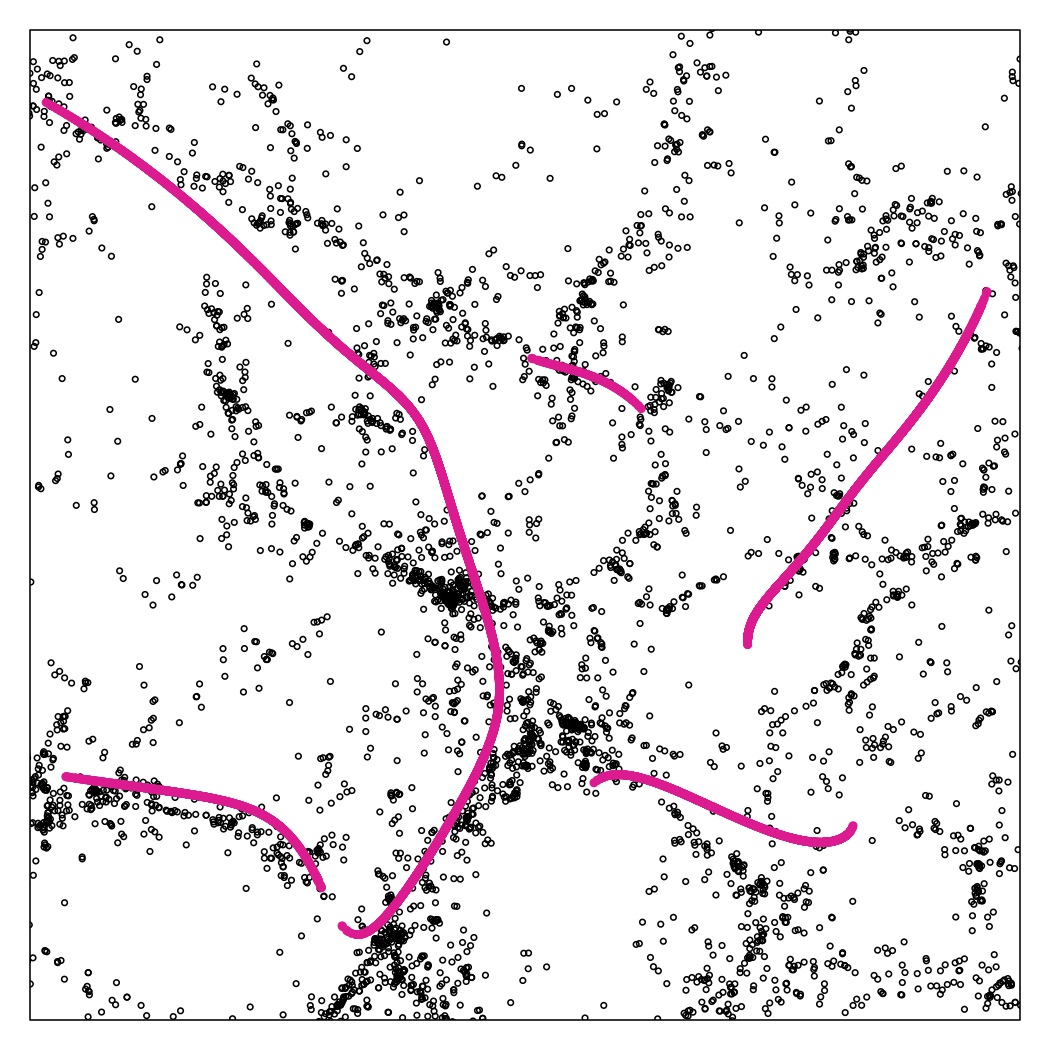}
\caption{
Another slice for the cosmic web data from the Sloan Digital Sky Survey.
The leftmost panel shows
the (estimated) $\cL_1$ coverage risk (right panel) for estimating density ridges 
under different smoothing parameters.
We estimated the $\cL_1$ coverage risk by using data splitting.
For the rest panels, from left to right, 
we display the case for under-smoothing, optimal smoothing, and over-smoothing.
As can be seen easily, the optimal smoothing method allows the SCMS
algorithm to detect the intricate cosmic network structure.}
\label{fig::cosmic2}
\end{figure}

\section{Discussion}
In this paper, we propose a method using coverage risk, a generalization
of mean integrated square error, to select the smoothing parameter for
the density ridge estimation problem.
We show that the coverage risk can be estimated using data splitting
or smoothed bootstrap and we derive the statistical consistency for risk estimators.
Both simulation and real data analysis show that the proposed 
bandwidth selector works very well in practice.

The concept of coverage risk is not limited to density ridges;
instead, it can be easily generalized to other manifold learning technique. 
Thus, we can use data splitting to estimate the risk and 
use the risk estimator to select the tuning parameters.
This is related to the so-called stability selection \cite{Rinaldo2010a},
which allows us to select tuning parameters even in an unsupervised learning settings.


\appendix
\section{Proofs}
Before we prove Theorem~\ref{thm::rate}, we need the following lemma
for comparing two curves.
\begin{lem}
Let $S_1,S_2$ be two bounded smooth curves in $\R^d$.
Let $\pi_{12}:S_1\mapsto S_2$ and $\pi_{21}:S_2\mapsto S_1$ be the projections
between them.
For $a\in S_1$ and $b\in S_2$, define $g_1(a)$ and $g_2(b)$ as the unit tangent vectors
for $S_1$ and $S_2$ at $a$ and $b$ respectively.
Assume $S_1$ and $S_2$ are similar in the following sense:
\begin{itemize}
\item[(S1)] $\pi_{12}$ and $\pi_{21}$ are one-one and onto,
\item[(S2)] the projections are similar:
$$
\max\left\{\sup_{x\in S_1}\norm{\pi_{12}(x)-\pi^{-1}_{21}(x)}, 
\sup_{x\in S_2}\norm{\pi_{21}(x)-\pi^{-1}_{12}(x)}\right\} = O(\epsilon_1),
$$
\item[(S3)] the tangent vectors are similar:
$$
\max\left\{\sup_{x\in S_1}|g_1(x)^Tg_2(\pi_{12}(x))|, 
\sup_{x\in S_2}|g_2(x)^Tg_1(\pi_{21}(x))|\right\} = 1+O(\epsilon_2),
$$
\item[(S4)] the length are similar:
$$
\length(S1) - \length(S2) = O(\epsilon_3)
$$
\end{itemize}
with $\epsilon_1,\epsilon_2, \epsilon_3$ being very small.
Let $\cI_1 = \int_{S_1} \norm{x-\pi_{12}(x)}^2 dx$
and $\cI_2 = \int_{S_2} \norm{y-\pi_{21}(y)}^2dy$.
Then we have
$$
|\cI_1 -\cI_2| = \sqrt{\cI_2} O(\epsilon_1) + \cI_2 O(\epsilon_2+\epsilon_3).
$$
Moreover, if we further assume 
\begin{itemize}
\item[(S5)] the Hausdorff distance $\Haus(S_1,S_2) = O(\epsilon_4)$ is small,
\end{itemize}
then for any function $\xi:\R^d \mapsto \R$ that has bounded continuous derivative,
we have
$$
\int_{0}^1\xi(\gamma_1(t))dt = \int_{0}^1\xi(\gamma_2(t))dt (1+ O(\epsilon_2+\epsilon_3+\epsilon_4)).
$$
\label{lem::curve}
\end{lem}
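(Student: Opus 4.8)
The plan is to move the integral $\cI_1$ from $S_1$ onto $S_2$ using the bijective projection $\pi_{21}$ as a change of variables, and then to bound the three errors this creates. By (S1) the map $\pi_{21}\colon S_2\to S_1$ is a smooth bijection, hence a legitimate substitution for arc--length measure; writing $J(y)$ for its arc--length Jacobian, this gives $\cI_1=\int_{S_2}\norm{\pi_{21}(y)-\pi_{12}(\pi_{21}(y))}^2\,J(y)\,d\mu_1(y)$, with $\int_{S_2}J\,d\mu_1=\length(S_1)$ since $\pi_{21}$ covers all of $S_1$. First I would replace $\pi_{12}\circ\pi_{21}$ by the identity: putting $x=\pi_{21}(y)$, so that $\pi_{21}^{-1}(x)=y$, assumption (S2) gives $\norm{\pi_{12}(x)-y}=O(\epsilon_1)$, hence $\norm{\pi_{21}(y)-\pi_{12}(\pi_{21}(y))}=\norm{y-\pi_{21}(y)}+O(\epsilon_1)$. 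Squaring, the cross term $2\,O(\epsilon_1)\int\norm{y-\pi_{21}(y)}\,J\,d\mu_1$ is, by Cauchy--Schwarz together with $J=O(1)$ and $\length(S_1)=O(1)$, of order $\sqrt{\cI_2}\,O(\epsilon_1)$, while the $O(\epsilon_1^2)$ remainder is of lower order; this accounts for the $\sqrt{\cI_2}\,O(\epsilon_1)$ contribution.

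It then remains to replace $J(y)$ by $1$, and I expect this to be the main obstacle. Differentiating the nearest--point relation $\langle y-\pi_{21}(y),\,g_1(\pi_{21}(y))\rangle=0$ along $S_2$ shows that $J(y)=|g_2(y)^{T}g_1(\pi_{21}(y))|$ up to a correction of size $O(\mathrm{curv}(S_1)\cdot\norm{y-\pi_{21}(y)})$; because $\norm{y-\pi_{21}(y)}^2$ is exactly the integrand, this correction contributes only a higher--order multiple of $\cI_2$. Assumption (S3) then yields $J(y)=1+O(\epsilon_2)$ uniformly on $S_2$, so that $\bigl|\int_{S_2}\norm{y-\pi_{21}(y)}^2(J-1)\,d\mu_1\bigr|\le O(\epsilon_2)\,\cI_2$, and assumption (S4) gives $\int_{S_2}(J-1)\,d\mu_1=\length(S_1)-\length(S_2)=O(\epsilon_3)$, which controls the part of the error coming from the overall length discrepancy and supplies the extra $\cI_2\,O(\epsilon_3)$. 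Assembling the pieces gives $\cI_1=\cI_2+\sqrt{\cI_2}\,O(\epsilon_1)+\cI_2\,O(\epsilon_2+\epsilon_3)$. The delicate point throughout is making the Jacobian estimate together with the curvature/offset bounds uniform over $S_2$ using only the smoothness of the two curves.

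For the averaged identity I would take $\gamma_1,\gamma_2$ to be the constant--speed parametrizations on $[0,1]$, so that $\int_0^1\xi(\gamma_i(t))\,dt=\length(S_i)^{-1}\int_{S_i}\xi\,d\mu_1$, and change variables by $\pi_{21}$ once more: $\int_{S_1}\xi\,d\mu_1=\int_{S_2}\xi(\pi_{21}(y))\,J(y)\,d\mu_1(y)$. Since $\xi$ has bounded derivative and, by (S5), $\norm{\pi_{21}(y)-y}\le\Haus(S_1,S_2)=O(\epsilon_4)$, we have $\xi(\pi_{21}(y))=\xi(y)+O(\epsilon_4)$; combining this with $J=1+O(\epsilon_2)$ and $\length(S_1)=\length(S_2)(1+O(\epsilon_3))$ from the earlier steps (and using that $\length(S_2)$ and the average of $\xi$ are bounded below, as they are in our application), division by $\length(S_1)$ converts the additive errors into the multiplicative factor $1+O(\epsilon_2+\epsilon_3+\epsilon_4)$. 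This last part is comparatively routine once the $J\approx 1$ estimate from the second step is in hand, since here the Hausdorff bound (S5) does most of the work.
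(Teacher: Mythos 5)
Your proposal is correct and is essentially the computation in the paper's own proof, just run in the opposite direction: you pull $\cI_1$ back to $S_2$ via $\pi_{21}$ with an explicit arc-length Jacobian $J$, whereas the paper pushes forward via the reparametrization $\eta_2=\pi_{12}\circ\gamma_1$ and compares the density of $W=\gamma_2^{-1}(\eta_2(U))$ with the uniform density --- the same Jacobian estimate in disguise, with (S2) producing the $\sqrt{\cI_2}\,O(\epsilon_1)$ term via Cauchy--Schwarz and (S3)--(S4) giving the $1+O(\epsilon_2+\epsilon_3)$ measure distortion in both arguments. Your explicit curvature-times-offset correction to $J$ and your caveat that the multiplicative form of the second assertion needs $\int_0^1\xi(\gamma_2(t))\,dt$ bounded away from zero are points the paper glosses over, and both are accurate.
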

\begin{proof}
Since $S_1$ and $S_2$ are two bounded, smooth curves.
We may parametrized them by $\gamma_1:[0,1]\mapsto S_1$ and $\gamma_2: [0,1]\mapsto S_2$
with
\begin{equation}
\begin{aligned}
\gamma'_1(t) &= \tilde{g}_1(\gamma_1(t)), \gamma_1(0) = s_1,\\
\gamma'_2(t) &= \tilde{g}_2(\gamma_2(t)), \gamma_2(0) = s_2 = \pi_{12}(\gamma_1(0)),
\end{aligned}
\end{equation}
where $\tilde{g}_1= \ell_1 g_1$ and $\tilde{g}_2 = \ell_2 g_2$ for $\ell_1,\ell_2$
being the length of $S_1$ and $S_2$
and $s_1$ one of the end point of $S_1$.
The constant $\ell_j$ works as a normalization constant since $g_j$ is an unit vector;
it is easy to verify that
$$
\length(S_j) = \int_{0}^1 \norm{\tilde{g}_j(t)} dt = \int_0^1 \ell_j\norm{g_j(t)}dt
 = \ell_j.
$$
The starting point $s_2\in S_2$ must be the projection $\pi_{12}(s_1)$ otherwise the condition (S1) 
will not hold.

Let
\begin{equation}
\cI_1 = \int_{0}^1 \norm{\gamma_1(t)-\pi_{12}(\gamma_1(t))}^2 dt, \quad
\cI_2 = \int_{0}^1 \norm{\gamma_2(t)-\pi_{21}(\gamma_2(t))}^2 dt.
\end{equation}
Then the goal is to prove $\cI_1-\cI_2 = O(\epsilon_1^2) +O(\epsilon_2^2)$.

Now we consider another parametrization for $S_2$.
Let $\eta_2:[0,1]\mapsto S_2$ such that $\eta_2(t) = \pi_{12}(\gamma_1(t))$.
By (S1), $\eta_2$ is a parametrization for $S_2$.
The parametrization $\eta_2(t)$ has the following useful properties:
\begin{equation}
\begin{aligned}
\eta_2(0) &= \pi_{12}(\gamma_1(0)) = s_2, \\
\quad \eta_2'(t) &= g_2(\eta_2(t)) g_2(\eta_2(t))^T \gamma'_1(t)
 =g_2(\eta_2(t)) g_2(\pi_{12}(\gamma_1(t)))^T  \tilde{g}_1(\gamma_1(t)).
\end{aligned}
\end{equation}
By condition (S3) and (S4), we have
\begin{equation}
\begin{aligned}
g_2(\pi_{12}(\gamma_1(t)))^T  \tilde{g}_1(\gamma_1(t))
& = \ell_1 g_2(\pi_{12}(\gamma_1(t)))^T  g_1(\gamma_1(t))\\
& = \ell_1 (1+O(\epsilon_2))\\
& = \ell_2 (1+O(\epsilon_2)+O(\epsilon_3))
\end{aligned}
\end{equation}
uniformly for all $t\in [0,1]$.
Now apply this result to $\eta'_2(t)$, we obtain that
\begin{equation}
\quad \eta_2'(t) = g_2(\eta_2(t))(1+O(\epsilon_2)+O(\epsilon_3)).
\end{equation}
Together with $\eta_2(0) = \gamma_2(0)$, we have
\begin{equation}
\sup_{t\in[0,1]} \norm{\eta_2(t)-\gamma_2(t)} = O(\epsilon_2)+O(\epsilon_3).
\end{equation}

Now by definition of $\cI_1$ and the fact that $\pi^{-1}_{12}(\eta_2(t))=\gamma_1(t)$, 
we have
\begin{equation}
\begin{aligned}
\cI_1 &= \int_{0}^1 \norm{\gamma_1(t)-\pi_{12}(\gamma_1(t))}^2 dt\\
&= \int_{0}^1 \norm{\pi^{-1}_{12}(\eta_2(t))-\eta_2(t)}^2 dt\\
& = \int_0^1 \norm{\pi_{21}(\eta_2(t))+ O(\epsilon_1)  -\eta_2(t) }^2dt \quad \mbox{by (S2)}\\
& = \cI_2' + \sqrt{\cI_2'}O(\epsilon_1),
\label{eq::cI1}
\end{aligned}
\end{equation}
where $\cI_2' = \int_0^1 \norm{\pi_{21}(\eta_2(t))  
-\eta_2(t) }^2dt$.

Now we bound the difference between $\cI_2'$ and $\cI_2$.
Let $U$ be an uniform distribution over $[0,1]$
and define $h(x):[0,1]\mapsto \R$ as
$h(x) = \norm{\pi_{21}(\gamma_2(x))-\gamma_2(x)}$.
Note that it is easy to see that $h(x)$ has bounded derivative.
Then,
\begin{equation}
\cI_2 = \E\norm{\pi_{21}(\gamma_2(U))-\gamma_2(U)}^2 = \E h(U).
\end{equation}
Since both $\gamma_2$ and $\eta_2$ are parametrization for the 
curve $S_2$,
$\gamma^{-1}_2$ is well defined for all image of $\eta_2$.
We define the random variable $W = \gamma_2^{-1}(\eta_2(U))$.
Then by definition of $\cI_2'$,
\begin{equation}
\cI_2' = \E\norm{\pi_{21}(\eta_2(U))-\eta_2(U)}^2 = 
\E h(W).
\end{equation}
Since $\sup_{t\in[0,1]}\norm{\gamma_2'(t) - \eta_2'(t)} = O(\epsilon_2)+O(\epsilon_3)$,
we have $\gamma_2^{-1}(\eta_2(x)) = x+ O(\epsilon_2) + O(\epsilon_3)$.
Thus, 
the $p_W(t)-p_U(t) = O(\epsilon_2) + O(\epsilon_3)$,
where $p_W$ and $p_U$ are the probability density for random variable $W$ and $U$.
Since $U$ is uniform distribution, $p_U=1$ so that
\begin{equation}
\begin{aligned}
\E h(W) &= \int_{0}^1 h(t) p_W(t)dt\\
& = \int_{0}^1 h(t) (p_U(t)+O(\epsilon_2) + O(\epsilon_3))dt\\
& = \int_{0}^1 h(t)(1+O(\epsilon_2) + O(\epsilon_3))dt\\
& = \E h(U) (1+ O(\epsilon_2) + O(\epsilon_3)).
\end{aligned}
\label{eq::lem2::h}
\end{equation}
This
implies $\cI_2' = \cI_2 (1+ O(\epsilon_2) + O(\epsilon_3))$.
Therefore, by \eqref{eq::cI1} we conclude
\begin{equation}
\begin{aligned}
\cI_1 &=  \cI_2' + \sqrt{\cI_2'}O(\epsilon_1)\\
& = \cI_2 + \sqrt{\cI_2}O(\epsilon_1) + \cI_2(O(\epsilon_2) + O(\epsilon_3)),
\end{aligned}
\end{equation}
which completes the proof for the first assertion.

Now we prove the second assertion, here we will assume (S5).
Since $\xi$ has bounded first derivative, 
\begin{equation}
\begin{aligned}
\int_0^1 \xi(\gamma_1(t))dt
& = \int_0^1 \xi(\pi_{12}(\gamma_1(t)))dt (1+O(\Haus(S_1,S_2)))\\
& = \int_0^1 \xi(\eta_2(t))dt (1+O(\epsilon_4)).
\end{aligned}
\label{eq::lem::S5_01}
\end{equation}

Again, let $U$ be the uniform distribution and $W= \gamma_2^{-1}(\eta_2(U))$.
We now define the function $\tilde{h}(t) = \xi(\gamma_2(t))$ for $t\in [0,1]$.
Since both $\xi$ and $\gamma_2$ are bounded differentiable,
$\tilde{h} $ is also bounded differentiable.
Then it is easy to see that
\begin{equation}
\begin{aligned}
\int_0^1 \xi(\eta_2(t))dt &= \xi(\gamma_2(t) \gamma_2^{-1}(\eta_2(t)))dt = \E \tilde{h}(W)\\
\int_0^1 \xi(\gamma_2(t))dt & = \E \tilde{h}(U).
\label{eq::lem::S5_02}
\end{aligned}
\end{equation}
Now by the same derivation of \eqref{eq::lem2::h},
we conclude
\begin{equation}
\int_0^1 \xi(\eta_2(t))dt = \E \tilde{h}(W) = \E \tilde{h}(U) (1+O(\epsilon_2)+O(\epsilon_3)).
\label{eq::lem::S5_03}
\end{equation}
Thus, by \eqref{eq::lem::S5_01} and \eqref{eq::lem::S5_03}, we conclude
\begin{equation}
\int_0^1 \xi(\gamma_1(t))dt = \int_0^1 \xi(\gamma_2(t))dt(1+O(\epsilon_2)+O(\epsilon_3)+O(\epsilon_4)),
\end{equation}
which completes the proof.

\end{proof}

The following Lemma bounds the rate of convergence
for the kernel density estimator and will be used frequently in the following derivation.
\begin{lem}[Lemma 10 of \cite{chen2014asymptotic}; 
see also \cite{genovese2014nonparametric}]
Assume (K1--K2) 
and that
$\log n/n \leq h^d \leq b$ for some $0 < b < 1$.
Then we have 
\begin{align}
||\hat{p}_{n}-p||_{ k,\max} = O(h^2) + O_P\left(\sqrt{\frac{\log n}{nh^{d+2k}}}\right)
\end{align}
for $k=0,\cdots,3$.
Moreover,
\begin{equation}
\E||\hat{p}_{n}-p||_{ k,\max} = O(h^2) + O\left(\sqrt{\frac{\log n}{nh^{d+2k}}}\right).
\end{equation}
\label{lem::Krate}
\end{lem}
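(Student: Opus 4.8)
The plan is the classical bias--variance decomposition for kernel estimators, executed uniformly in $x$ and, simultaneously, for every derivative of order at most $3$. Writing $\hat{p}_n - p = (\E\hat{p}_n - p) + (\hat{p}_n - \E\hat{p}_n)$ and differentiating the kernel sum term by term (legitimate under (K1), since $K\in\mathbf{BC}^3$), the multi-index derivative of order $k$ satisfies $\partial^\alpha\hat{p}_n(x) = \tfrac{1}{nh^{d+k}}\sum_{i=1}^n \big(\partial^\alpha K\big)\!\big(\tfrac{x-X_i}{h}\big)$, so the same two-term split controls $\norm{\hat{p}_n - p}_{k,\max}$ once each term is bounded uniformly over $x$ and over the finitely many multi-indices $\alpha$ with $|\alpha|\le 3$; taking a maximum over that finite set costs nothing in the rate.

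For the bias term, I would change variables to write $\E\,\partial^\alpha\hat{p}_n(x) = \int K(u)\,\partial^\alpha p(x-hu)\,du$ and Taylor-expand $\partial^\alpha p(x-hu)$ around $x$ in powers of $h$. Because $K$ is symmetric with $\int uK(u)\,du = 0$ and $\int |u|^2 K(u)\,du<\infty$ (assumption (K1)), the order-$h$ term vanishes and the remainder is controlled by $h^2$ times a bounded higher derivative of $p$ (available from the standing smoothness hypothesis on $p$), uniformly in $x$. Hence $\norm{\E\hat{p}_n - p}_{k,\max} = O(h^2)$ for $k=0,\dots,3$.

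The substantive step is the stochastic term. Fix a multi-index $\alpha$ of order $k$ and set $g_{x}(y) = h^{-(d+k)}(\partial^\alpha K)\!\big(\tfrac{x-y}{h}\big)$, so that $\partial^\alpha\hat{p}_n(x) - \E\,\partial^\alpha\hat{p}_n(x) = (P_n - P)g_x$ is an empirical process indexed by $x\in\R^d$ over $\mathcal{G} = \{g_x : x\in\R^d\}$. Assumption (K2) says precisely that the un-normalized class $\mathcal{K}$ of kernel derivatives is of VC type in the sense of a polynomial uniform covering number, and this property transfers to $\mathcal{G}$; its envelope has sup-norm $O(h^{-(d+k)})$, and a change of variables together with boundedness of $p$ gives $\sup_x \E\,g_x(X)^2 = O(h^{-(d+2k)})$, so the relevant $L_2(P)$ radius is $O(h^{-(d/2+k)})$. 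Plugging these into a Talagrand-type concentration inequality together with the expected-supremum bound for VC-type classes of Gin\'e--Guillou \cite{Gine2002} and Einmahl--Mason \cite{Einmahl2005} yields $\E\sup_x\big|(P_n-P)g_x\big| = O\!\big(\sqrt{\tfrac{\log n}{nh^{d+2k}}}\big)$ under the hypothesis $\log n/n\le h^d$; integrating the sub-exponential tail of the same inequality upgrades this to the $O_P$ statement. Since the expected supremum already has this order, the displayed bound on $\E\norm{\hat{p}_n - p}_{k,\max}$ follows by combining with the bias estimate, and maximizing over $|\alpha|\le 3$ completes the proof. For the exact constants one may instead simply cite Lemma 10 of \cite{chen2014asymptotic}.

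The main obstacle is the uniform (in $x$) stochastic bound, and within it the bookkeeping of the $h$-dependence: one must confirm that rescaling preserves the VC-type uniform-entropy bound from (K2) (the Nolan--Pollard / Gin\'e--Guillou argument) and then track how the $h$-indexed envelope and variance proxy enter the maximal inequality so as to produce exactly the factor $\sqrt{\log n/(nh^{d+2k})}$. The $\log n$ is the essential feature and is where the condition $\log n/n\le h^d\le b$ is used: the maximal inequality contributes a $\log(1/h)$ factor, which under $h^d\ge \log n/n$ is $O(\log n)$, and the same condition is what makes the ``linear'' error term $\tfrac{\log n}{nh^{d+k}}$ negligible compared with $\sqrt{\log n/(nh^{d+2k})}$. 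By contrast, the bias expansion and the reduction to finitely many derivative orders are entirely routine.
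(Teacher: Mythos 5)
The paper never proves this lemma: it is imported by citation as Lemma 10 of \cite{chen2014asymptotic} (see also \cite{genovese2014nonparametric}), so there is no internal proof to compare against, and indeed citing that lemma is all the paper does. Your sketch correctly reconstructs the standard argument behind the cited result --- the symmetric-kernel Taylor expansion giving the $O(h^2)$ bias uniformly in $x$ and over the finitely many multi-indices $|\alpha|\le 3$, and for the centered term the Gin\'e--Guillou/Einmahl--Mason-type maximal inequality for the VC-type class guaranteed by (K2), with envelope of order $h^{-(d+k)}$ and variance proxy of order $h^{-(d+2k)}$, which is exactly what produces $\sqrt{\log n/(nh^{d+2k})}$; you also correctly locate where $\log n/n\le h^d\le b$ is used (to turn $\log(1/h)$ into $\log n$ and to make the $\tfrac{\log n}{nh^{d+k}}$ term negligible), and the expectation bound and the $O_P$ bound both follow from the same concentration inequality. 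The only bookkeeping caveat is the smoothness of $p$ implicitly needed for the bias at order $k$ (bounded derivatives of order $k+2$, so five for $k=3$, slightly more than the ``four times bounded differentiable'' invoked in the theorems), but this requirement is inherited from the cited lemma rather than introduced by your argument.
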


\begin{proof}[ for Theorem \ref{thm::rate}]
Here we prove the case for density ridges.
The case for density level set can be proved by the similar method.
We will use Lemma~\ref{lem::curve} to obtain the rate.
Our strategy is that first we derive
$\mathbb{E}(d(U_R,\hat{R}_n)^2)$
and then show that 
the other part $\mathbb{E}(d(U_{\hat{R}_n},R)^2)$
is similar to the first part.

{\bf Part 1.}
We first introduce the concept of reach \citep{Federer1959}.
For a smooth set $A$, the reach is defined as
\begin{equation}
\reach(A) = \inf\{r: \mbox{every point in $A\oplus r$ has an unique projection onto $A$.}\}.
\end{equation}
The reach condition is essential to establish a one-one projection between
two smooth sets.

By Lemma 2, property 7 of \cite{chen2014asymptotic}, 
\begin{equation}
\reach(R) \geq \min\left\{\frac{\delta_R}{2}, 
\frac{\beta_2^2}{A_2(\norm{p^{(3)}}_{\max}+\norm{p^{(4)}}_{\max})}\right\} 
\end{equation}
for some constant $A_2$.
Note that $\delta_R$ and $\beta_2$ are the constants in condition (R).

Thus, as long as $\hat{R}_n$ is close to $R$,
every point on $\hat{R}_n$ has an unique projection onto $R$.
Similarly, $\reach(\hat{R}_n)$ will have a similar bound to $\reach(R)$ whenever 
$\norm{\hat{p}_n-p}^*_{4,\max}$ is small (reach only depends on fourth derivatives).
Hence, every point on $R$ will have an unique projection onto $\hat{R}_n$.
The projections between $R$ and $\hat{R}_n$ will be one-one
and onto except for points near the end points for $R$ and $\hat{R}_n$.
That is,
when $\norm{\hat{p}_n-p}^*_{4,\max}$ is sufficiently small, 
there exists $R^\dagger\subset R$ and $\hat{R}_n^\dagger\subset \hat{R}_n$ 
such that the projection between $R^\dagger$ and $\hat{R}_n^\dagger$
are one-one and onto.
Moreover, the length difference
\begin{equation}
\begin{aligned}
\length(R)-\length(R^\dagger) &= O(\Haus(\hat{R}_n, R)),\\
\length(\hat{R}_n)-\length(\hat{R}_n^\dagger) &= O(\Haus(\hat{R}_n, R)).
\end{aligned}
\label{eq::length}
\end{equation}
Note that by Theorem 6 in \cite{genovese2014nonparametric},
\begin{equation}
\Haus(\hat{R}_n, R) = O(\norm{\hat{p}_n-p}^*_{2,\max}).
\label{eq::Haus}
\end{equation}

Let $x\in R^\dagger$,
and let $x' =\pi_{\hat{R}_n}(x)\in \hat{R}^\dagger_n$ be its projection onto $\hat{R}_n$.
Then by Theorem 3 in \cite{chen2014asymptotic}
(see their derivation in the proof, the empirical approximation,
page 30-32 and equation (79)),
we have
\begin{equation}
x'-x = W_2(x) (\hat{g}_n(x)-g(x))(1+O(\norm{\hat{p}_n-p}^*_{3,\max})),
\end{equation}
where 
\begin{equation}
\begin{aligned}
W_2(x) &= N(x)H_N^{-1}(x)N(x)\\
H_N(x) & = N(x)^T H(x)N(x)
\end{aligned}
\end{equation}
and $N(x)$ is a $d\times (d-1)$ matrix called the \emph{normal matrix} for $R$ at $x$ 
whose columns space spanned the normal space for $R$ at $x$.
The existence for $N(x)$ is given in Section 3.2 and Lemma 2 in \cite{chen2014asymptotic}.
Thus, we have 
\begin{equation}
\E\left(d(x, \hat{R}_n)^2\right) = \E\left(\norm{x-x'}^2\right) 
= \E\left\|W_2(x) (\hat{g}_n(x)-g(x))\right\|^2 +\Delta_n,
\end{equation}
where $\Delta_n$ is the remaining term
and by Cauchy-Schwartz inequality, 
$$
\Delta_n  \leq \E\left\|W_2(x) (\hat{g}_n(x)-g(x))\right\|^2 O(\E\norm{\hat{p}_n-p}^{*}_{3,\max}).
$$
Thus, 
\begin{equation}
\begin{aligned}
\E\left(d(x, \hat{R}_n)^2\right) 
&= \E\left\|W_2(x) (\hat{g}_n(x)-g(x))\right\|^2 +\Delta_n\\
& = \E\left\|W_2(x) (\hat{g}_n(x)-\mathbb{E}(\hat{g}_n(x))+\mathbb{E}(\hat{g}_n(x))-g(x))\right\|^2+\Delta_n\\
& = \Tr(\Cov(W_2(x)\hat{g}_n(x)) )+ \norm{W_2(x) (\mathbb{E}(\hat{g}_n(x))-g(x))}^2 +\Delta_n\\
& = \frac{1}{nh^{d+2}}\Tr(\Sigma(x)) + h^4 b(x)^Tb(x)
+ o\left(\frac{1}{nh^{d+2}}\right) + o\left(h^4\right),
\label{eq::main}
\end{aligned}
\end{equation}
where
\begin{equation}
\begin{aligned}
\Sigma(x) &= W_2(x)\Sigma(K) W_2(x)p(x),\\
b(x) & = c(K)W_2(x)\nabla (\nabla^2 p(x))
\end{aligned}
\label{eq::parameter}
\end{equation}
are related to the variance and bias for nonparametric gradient estimation
($\Sigma(K)p(x)$ is the asymptotic covariance matrix for $\hat{p}_n$ and
$c(K) \nabla (\nabla^2 p(x))$ is the asymptotic bias for $\hat{p}_n$).
$\Sigma(K)$ is a matrix and $c(K)$ is a scalar; they both depends only on the 
kernel function $K$.
$\nabla^2 = \frac{\partial^2}{\partial x_1^2} +\cdots+ \frac{\partial^2}{\partial x_d^2}$
is the Laplacian operator.

Now we compute $\E(d(U_R, \hat{R}_n)^2)$.
Note that since the length difference between $R$ and 
$R^\dagger$ is bounded by \eqref{eq::length} and \eqref{eq::Haus}:
\begin{equation}
\begin{aligned}
\mathbb{P}(U_R\in R^\dagger)  &= 1-O(\mathbb{E}(\norm{\hat{p}_n-p}^*_{2,\max}))\\
& =1- O(h^2) -O\left(\sqrt{\frac{\log n}{nh^{d+4}}}\right).
\end{aligned}
\label{eq::Lprob}
\end{equation}
Note that we use Lemma~\ref{lem::Krate} to convert the norm into probability bound.
By tower property (law of total expectation),
\begin{equation}
\begin{aligned}
\E(d(U_R, \hat{R}_n)^2) &= \E(\E(d(U_R, \hat{R}_n)^2|U_R))\\
&=\E(\E(d(U_R, \hat{R}_n)^2|U_R, U_R\in R^\dagger))\mathbb{P}(U_R\in R^\dagger)\\
&\quad+ \E(\E(d(U_R, \hat{R}_n)^2|U_R, U_R\notin R^\dagger))\mathbb{P}(U_R\notin R^\dagger)\\
&=\E\left(\frac{1}{nh^{d+2}}\Tr(\Sigma(U_R)) + h^4 b(U_R)^Tb(U_R)\right)
+ o\left(\frac{1}{nh^{d+2}}\right) + o\left(h^4\right).
\end{aligned}
\end{equation}
Note that by \eqref{eq::Lprob}, the contribution from $\mathbb{P}(U_R\notin R^\dagger)$
is smaller than the main effect in \eqref{eq::main} so we absorb it into the small $o$ terms.
Defining $B_R^2  = \E( b(U_R)^Tb(U_R))$ and $\sigma_R^2 = \E(\Tr(\Sigma(U_R)))$,
we obtain 
\begin{equation}
\E(d(U_R, \hat{R}_n)^2) = B_R^2 h^4 + \frac{\sigma_R^2}{nh^{d+2}} 
+ o\left(\frac{1}{nh^{d+2}}\right) + o\left(h^4\right).
\label{eq::lem2::p1}
\end{equation}

{\bf Part 2.}
We have proved the first part for the $\cL_2$ coverage risk.
Now we prove the result for $\E(d(U_{\hat{R}_n}, R)^2)$; 
this will apply Lemma~\ref{lem::curve}.
If we think of $R^\dagger$ as $S_1$ and $\hat{R}_n^\dagger$ as 
$S_2$ in Lemma~\ref{lem::curve},
then
\begin{equation}
\begin{aligned}
\E(d(U_{R^\dagger}, \hat{R}_n^\dagger)^2|X_1,\cdots,X_n) 
=  \int_{0}^1 \norm{\gamma_1(t)-\pi_{12}(\gamma_1(t))}^2 dt = \cI_1\\
\E(d(U_{\hat{R}_n^\dagger}, R^\dagger)^2|X_1,\cdots,X_n) =  
\int_{0}^1 \norm{\gamma_2(t)-\pi_{21}(\gamma_2(t))}^2 dt = \cI_2.
\end{aligned}
\end{equation}
Thus, $\E(d(U_{\hat{R}_n^\dagger}, R^\dagger)^2)$
is approximated by $\E(d(U_{R^\dagger}, \hat{R}_n^\dagger)^2)$
if the $\epsilon_1,\epsilon_2,\epsilon_3$ in Lemma~\ref{lem::curve} is small.
Here we bound $\epsilon_j$.

The bound for $\epsilon_1$ is simple. For all $x\in S_1$,  
let $\theta$ be the angle between the two vectors $v_1 = \pi_{12}(x)-x$
and $v_2 =\pi_{21}^{-1}(x)-x$.
By the property of projection, $v_1$ is normal to $\hat{R}_n$ at $\pi_{12}(x)$
and $v_2$ is normal to $R$ at $x$.
Thus, by Lemma 2 properties 5 and 6 of \cite{chen2014asymptotic}, 
the angle $\theta$ is bounded by $O(\norm{\hat{p}_n-p}^{*}_{3,\max})$.
Note that their Lemma proves the normal matrices $N(x)$
and $\hat{N}_n(\pi_{12}(x))$ are close which implies
the canonical angle between two subspace are close
so that $\theta$ is bounded.
Now by the fact that both $\norm{\pi_{12}(x)-x}$ and $\norm{\pi^{-1}_{21}(x)-x}$
are bounded by $\Haus(\hat{R}_n, R)$,
we conclude $\epsilon_1 \leq \Haus(\hat{R}_n, R)\times\theta
= O(\norm{\hat{p}_n-p}^{*2}_{3,\max})$.

For $\epsilon_2$, we will use the property of normal matrix $N(x)$.
Let $\hat{N}_n(x)$ be the normal matrix for $\hat{R}_n$ at $x$.
By Lemma 2, properties 5 and 6 of \cite{chen2014asymptotic},
\begin{align*}
\norm{N(x)N(x)^T - \hat{N}_n(\pi_{\hat{R}_n}(x))\hat{N}_n(\pi_{\hat{R}_n}(x))^T}_{\max}
 &= O (\Haus(\hat{R}_n, R)) + O(\norm{\hat{p}_n-p}^*_{3,\max}) \\
 &= O(\norm{\hat{p}_n-p}^*_{3,\max}).
\end{align*}
$N(x)N(x)^T$ is the projection matrix onto normal space;
so the tangent vector is perpendicular to that projection.
The bounds for the two projection matrix implies the bound
to the two tangent vectors.
Thus,  $\epsilon_2 = O(\norm{\hat{p}_n-p}^*_{3,\max})$.

For $\epsilon_3$, since the smoothness for $\hat{R}_n$
is similar to $R$ (the normal direction is similar by $\epsilon_2$)
and their Hausdorff distance is bounded by $O(\norm{\hat{p}_n-p}^*_{2,\max})$.
The length difference is at the same rate of Hausdorff distance.
Thus, we may pick $\epsilon_3 = O(\norm{\hat{p}_n-p}^*_{2,\max})$.

Let $\cI_1 = \E(d(U_{R^\dagger}, \hat{R}_n^\dagger)^2|X_1,\cdots,X_n)$
and $\cI_2 = \E(d(U_{\hat{R}_n^\dagger}, R^\dagger)^2|X_1,\cdots,X_n)$.
By Lemma~\ref{lem::curve} and the above choice for $\epsilon_j$,
we conclude
\begin{equation}
\begin{aligned}
\cI_1
&= \cI_2(1+ O(\norm{\hat{p}_n-p}^{*}_{3,\max})) + \sqrt{\cI_2}O(\norm{\hat{p}_n-p}^{*2}_{3,\max}).
\end{aligned}
\end{equation}
Thus, by tower property again (taking expectation over both side)
and Lemma \ref{lem::Krate}
$\E \norm{\hat{p}_n-p}^*_{3,\max} = O(h^2) + O\left(\sqrt{\frac{\log n}{nh^{d+6}}}\right) = o(1)$,
\begin{equation}
\E(d(U_{R^\dagger}, \hat{R}_n^\dagger)^2) = 
\E(\cI_1) =\E(\cI_2) + o(1) = 
\E(d(U_{\hat{R}_n^\dagger}, R^\dagger)^2) +o(1).
\label{eq::lem2::p2::1}
\end{equation}
Now since by \eqref{eq::length} and the fact that
$\E \Haus(\hat{R}_n, R) = o(1)$, we have 
\begin{equation}
\begin{aligned}
\E(d(U_{R^\dagger}, \hat{R}_n^\dagger)^2) &= \E(d(U_{R}, \hat{R}_n)^2)(1+ o(1))\\
\E(d(U_{\hat{R}_n^\dagger}, R^\dagger)^2) &= \E(d(U_{\hat{R}_n}, R)^2)(1+ o(1)).
\label{eq::lem2::p2::2}
\end{aligned}
\end{equation}

Combining  by \eqref{eq::lem2::p1}, \eqref{eq::lem2::p2::1} and \eqref{eq::lem2::p2::2}, 
we conclude
\begin{equation}
\begin{aligned}
\Risk_{2,n} &= \frac{\E(d(U_{R}, \hat{R}_n)^2)+ \E(d(U_{\hat{R}_n}, R)^2)}{2}\\
&= \E(d(U_{R}, \hat{R}_n)^2)+o(1)\\ 
&= 
B_R^2 h^4 + \frac{\sigma_R^2}{nh^{d+2}} 
+ o\left(\frac{1}{nh^{d+2}}\right) + o\left(h^4\right),
\end{aligned}
\end{equation}
where 
$B_R^2  = \E( b(U_R)^Tb(U_R))$ and $\sigma_R^2 = \E(\Tr(\Sigma(U_R)))$.
Note that all the above derivation works only when 
\begin{equation}
\E \norm{\hat{p}_n-p}^*_{3,\max} = O(h^2) + O\left(\sqrt{\frac{\log n}{nh^{d+6}}}\right) = o(1).
\end{equation}
This requires $h\rightarrow 0$ and $\frac{\log n}{nh^{d+6}}\rightarrow 0$, which
constitutes the conditions on $h$ we need.
 
\end{proof}

\begin{proof}[ for Theorem~\ref{thm::estimate}]
Since we are proving the bootstrap consistency,
we assume $X_1,\cdots,X_n$ are given.

By Theorem~\ref{thm::rate}, 
the estimated risk $\hat{\Risk}_{n,2}$ 
has the following asymptotic behavior
\begin{equation}
\hat{\Risk}_{n,2} =\hat{B}_R^2 h^4 + \frac{\hat{\sigma}_R^2}{nh^{d+2}} 
+ o\left(\frac{1}{nh^{d+2}}\right) + o\left(h^4\right),
\label{eq::thm2::eq0}
\end{equation}
where
\begin{equation}
\begin{aligned}
\hat{B}_R^2 & =  \E\left( \hat{b}_n(U_{\hat{R}_n})^T\hat{b}_n(U_{\hat{R}_n})|X_1,\cdots,X_n\right),\\
\hat{\sigma}_R^2 &= \E\left(\Tr(\hat{\Sigma}_n(U_{\hat{R}_n}))|X_1,\cdots,X_n\right)
\end{aligned}
\label{eq::thm2::eq1}
\end{equation}
with $\hat{b}_n(x) = c(K)W_2(x)\nabla (\nabla^2 \hat{p}_n(x))$
and $\hat{\Sigma}_n(x) = W_2(x)\Sigma(K) W_2(x)\hat{p}_n(x)$
from \eqref{eq::parameter}.
To prove the bootstrap consistency,
it is equivalent to prove that $\hat{B}_R^2$
and $\hat{\sigma}^2_R$ converges to $B_R$ and $\sigma_R^2$.

Here we prove the consistency for $\hat{B}_R$.
The consistency for $\hat{\sigma}_R$ can be proved in the similar way.
We define the following two functions
\begin{equation}
\begin{aligned}
\hat{\Omega}_n (x) &= \norm{c(K)W_2(x)\nabla (\nabla^2 \hat{p}_n(x))}^2,\\
\Omega (x) &= \norm{c(K)W_2(x)\nabla (\nabla^2 p(x))}^2.
\end{aligned}
\label{eq::thm2::eq2}
\end{equation}
It is easy to see that $\hat{B}_R^2 = \E\left(\hat{\Omega}_n(U_{\hat{R}_n})|X_1,\cdots,X_n\right)$
and $B_R^2 = \E\left(\Omega(U_R)\right)$.

Similarly as in the proof for Theorem~\ref{thm::rate},
we define $\hat{R}_n^\dagger\subset\hat{R}_n$ that has
one-one and onto projection to $R^\dagger$.
By \eqref{eq::Lprob}, we can replace $U_{\hat{R}_n}$ by $U_{\hat{R}_n^\dagger}$
and $U_R$ by $U_{R^\dagger}$ at the cost of probability 
$O(h^2) + O\left(\sqrt{\frac{\log n}{nh^{d+4}}}\right)$.

Now we will apply Lemma~\ref{lem::curve} again to prove the result.
Again, we think of $R^\dagger$ as $S_1$ and $\hat{R}_n^\dagger$ as $S_2$.
Let $U$ be an uniform distribution over $[0,1]$.
Then the random variable $U_{R^\dagger} = \gamma_1(U)$ 
and $U_{\hat{R}_n^\dagger} = \gamma_2(U)$.
Thus,
\begin{equation}
\E\left(\Omega(U_{R^\dagger})\right) = \int_0^1 \Omega(\gamma_1(t))dt,\quad
\E\left(\hat{\Omega}_n(U_{\hat{R}_n^\dagger})|X_1,\cdots,X_n\right) = \int_0^1 \hat{\Omega}_n(\gamma_2(t))dt.
\label{eq::thm2::eq3}
\end{equation}
By the second assertion in Lemma~\ref{lem::curve}, 
\begin{equation}
\begin{aligned}
\E\left(\Omega(U_{R^\dagger})\right) & = \int_0^1 \Omega(\gamma_1(t))dt\\
& = \int_0^1 \Omega(\gamma_2(t))dt(1+O(\epsilon_2)+O(\epsilon_3)+O(\epsilon_4))\\
& = \int_0^1 \Omega(\gamma_2(t))dt (1+O(\norm{\hat{p}_n-p}^*_{3,\max})).
\end{aligned}
\label{eq::thm2::eq4}
\end{equation}
Note that we use the fact that $\Haus(\hat{R}_n,R) = O(\norm{\hat{p}_n-p}^*_{2,\max})$.
Since $\Omega$ only involves third derivative for the density $p$,
we have $\sup_{x\in \R^d}\norm{\Omega(x)-\hat{\Omega}_n(x)} = O(\norm{\hat{p}_n-p}_{3,\max})$.
This implies 
\begin{equation}
 \int_0^1 \Omega(\gamma_2(t))dt  
 =  \int_0^1 \hat{\Omega}_n(\gamma_2(t))dt  + O(\norm{\hat{p}_n-p}_{3,\max}).
 \label{eq::thm2::eq5}
\end{equation}

Now combining all the above and the definition for $\hat{B}_R$,
we conclude
\begin{equation}
\begin{aligned}
\hat{B}_R^2 &= \E\left(\hat{\Omega}_n(U_{\hat{R}_n})|X_1,\cdots,X_n\right)\\
& = \E\left(\hat{\Omega}_n(U_{\hat{R}_n^\dagger})|X_1,\cdots,X_n\right) + O(\Haus(\hat{R}_n,R))\\
& =  \int_0^1 \hat{\Omega}_n(\gamma_2(t))dt + O(\Haus(\hat{R}_n,R)) \quad \mbox{ (by \eqref{eq::thm2::eq3})}\\
& =  \int_0^1 \Omega(\gamma_2(t))dt  + O(\norm{\hat{p}_n-p}_{3,\max})\quad \mbox{ (by \eqref{eq::thm2::eq5})}\\
& = \E\left(\Omega(U_{R^\dagger})\right)+ O(\norm{\hat{p}_n-p}_{3,\max})\quad \mbox{ (by \eqref{eq::thm2::eq4})}\\
& = \E\left(\Omega(U_R)\right)+ O(\norm{\hat{p}_n-p}_{3,\max})\\
& = B_R^2+ O(\norm{\hat{p}_n-p}_{3,\max}).
\end{aligned}
\end{equation}

Therefore, as along as we have $\norm{\hat{p}_n-p}_{3,\max} = o_P(1)$,
we have 
\begin{equation}
\hat{B}_R^2-B_R^2 = o_P(1).
\label{eq::thm2::eq6}
\end{equation}
Similarly, we the same condition implies
\begin{equation}
\hat{\sigma}_R^2-\sigma_R^2 = o_P(1).
\label{eq::thm2::eq7}
\end{equation}
Now recall from \eqref{eq::thm2::eq0} and Theorem~\ref{thm::rate},
the risk difference is
\begin{equation}
\begin{aligned}
\hat{\Risk}_{n,2}-\Risk_{n,2} &= (\hat{B}_R^2-B_R^2)h^4 + \frac{\hat{\sigma}_R^2-\sigma_R^2}{nh^{d+2}}
+o\left(h^4\right)+o\left(\frac{1}{nh^{d+2}}\right)\\
& = o_P\left(h^4\right)+o_P\left(\frac{1}{nh^{d+2}}\right) \quad 
\mbox{(by \eqref{eq::thm2::eq6} and \eqref{eq::thm2::eq7})}.
\end{aligned}
\label{eq::thm2::risk2}
\end{equation}

Since Theorem~\ref{thm::rate} implies 
$\Risk_{n,2} = O\left(h^4\right)+O\left(\frac{1}{nh^{d+2}}\right)$,
by \eqref{eq::thm2::risk2} we have
\begin{equation}
\frac{\hat{\Risk}_{n,2}-\Risk_{n,2}}{\Risk_{n,2}} = o_P(1)
\label{eq::thm2::risk3}
\end{equation}
which proves the theorem.

Note that in order \eqref{eq::thm2::risk3} to hold, we need
$\norm{\hat{p}_n-p}_{3,\max} = o_P(1)$.
By Lemma~\ref{lem::Krate},
\begin{equation}
\norm{\hat{p}_n-p}_{3,\max} = O(h^2) + O_P\left(\sqrt{\frac{\log n}{nh^{d+6}}}\right).
\end{equation}
Thus, a sufficient condition to $\norm{\hat{p}_n-p}_{3,\max} = o_P(1)$
is to pick $h$ such that $\frac{\log n}{nh^{d+6}}\rightarrow 0$ and $h\rightarrow 0$.
This gives the restriction for the smoothing parameter $h$.

\end{proof}

\newpage
\small

\bibliographystyle{abbrvnat}
\bibliography{CD.bib}

---
\end{document}